\newif\ifarxiv 
    \newwrite\bibnotes
    \def\bibnotesext{Notes.bib}
\write\bibnotes{@CONTROL{REVTEX41Control}}
\write\bibnotes{@CONTROL{%
    apsrev41Control,author="08",editor="1",pages="1",title="0",year="0"}}
\write\@auxout{\string\citation{apsrev41Control}}%
\definecolor{mylinkcolor}{rgb}{0,0,0.8} 
\newtheorem{theorem}{Theorem}
\newtheorem{lemma}{Lemma}
\newtheorem{corollary}{Corollary}
\theoremstyle{definition}
\newtheorem{remark}{Remark}
\newcommand{\bbr}{\mathbb{R}}
\newcommand{\bbn}{\mathbb{N}}
\let\inner\relax
\NewDocumentCommand\inner{mg}{%
	\ensuremath{\left\langle #1, \! \IfNoValueTF{#2}{#1}{#2}\right\rangle}%
}
\let\outer\relax
\NewDocumentCommand\outer{mg}{%
	\ensuremath{\ket{#1}\!\! \IfNoValueTF{#2}{\bra{#1}}{\bra{#2}}}%
}
\newcommand{\tr}[1]{\operatorname{Tr}\left[#1\right]}
\newcommand{\bobk}[1]{Bob$^{(#1)}$}
\newcommand{\Yk}[1]{Y^{(#1)}}
\newcommand{\cE}{\mathcal{E}}
\newcommand{\cH}{\mathcal{H}}
\newcommand{\cS}{\mathcal{S}}
\newcommand{\Bk}[1]{B^{(#1)}}
\newcommand{\yk}[1]{y_{#1}}
\newcommand{\bk}[1]{b_{#1}}
\newcommand{\eff}[2]{F_{#1|#2}}
\newcommand{\chsh}{I_{\mathrm{CHSH}}}
\newcommand{\chshexp}[1]{\chsh^{(#1)}}
\newcommand{\dup}{d^{\uparrow}}
\newcommand{\ddown}{d^{\downarrow}}
\newcommand{\id}{\mathbb{I}}
\renewcommand{\epsilon}{\varepsilon}
\begin{document}
\title{Arbitrarily many independent observers can share the nonlocality of a single maximally entangled qubit pair}
\author{Peter~J.~Brown}
\email{peter.brown@ens-lyon.fr}
\affiliation{ENS Lyon, LIP, F-69342, Lyon Cedex
	07, France}
\affiliation{Department of Mathematics, University of York, Heslington, York, YO10 5DD, United Kingdom}
\author{Roger~Colbeck}
\email{roger.colbeck@york.ac.uk}
\affiliation{Department of Mathematics, University of York, Heslington, York, YO10 5DD, United Kingdom}

\date{$26^{\text{th}}$ August 2020}

\begin{abstract}
  Alice and Bob each have half of a pair of entangled qubits. Bob measures his half and then passes his qubit to a second Bob who measures again and so on. The goal is to maximize the number of Bobs that can have an expected violation of the Clauser-Horne-Shimony-Holt (CHSH) Bell inequality with the single Alice. This scenario was introduced in [Phys.\ Rev.\ Lett.\ {\bf 114}, 250401 (2015)] where the authors mentioned evidence that when the Bobs act independently and with unbiased inputs then at most two of them can expect to violate the CHSH inequality with Alice. Here we show that, contrary to this evidence, arbitrarily many independent Bobs can have an expected CHSH violation with the single Alice. Our proof is constructive and our measurement strategies can be generalized to work with a larger class of two-qubit states that includes all pure entangled two-qubit states. Since violation of a Bell inequality is necessary for device-independent tasks, our work represents a step towards an eventual understanding of the limitations on how much device-independent randomness can be robustly generated from a single pair of qubits.
\end{abstract}
\maketitle

\ifarxiv\section{Introduction}\else\noindent{\it Introduction.|}\fi
Quantum theory predicts the possibility of measuring correlations that cannot be explained by standard notions of causality~\cite{Bell}. These `non-local' correlations are a crucial resource for device-independent tasks such as key distribution~\cite{MayersYao,BHK,ABGMPS}, randomness expansion~\cite{ColbeckThesis,CK2,PAMBMMOHLMM} and randomness amplification~\cite{CR_free}. The idea is that because such correlations cannot be explained using local hidden variables, the individual outcomes are random and unpredictable to an adversary~\cite{Ekert,MAG,BHK}. Several recent advances have enabled the first experimental demonstrations of device-independence~\cite{diexperiment3,diexperiment2,diexperiment1}. However, further theoretical and experimental advances are needed before this can become a practical technology.

In this work we study fundamental limits of non-locality, asking whether a single pair of entangled qubits could generate a long sequence of non-local correlations. Such an approach could be useful for situations where a significant bottleneck lies in the state generation, such as in nitrogen vacancy based experiments~\cite{Hensen&}. We study a scenario in which a single Alice tries to establish non-local correlations with a sequence of Bobs who sequentially measure one half of an entangled qubit pair. An additional restriction we impose is that each Bob in the sequence can only send a single qubit (his post-measurement state) to the next. In particular, the classical information pertaining to measurement choices and outcomes of each Bob is not shared. It is in this sense that the Bobs act independently of one another.

This sequential scenario (see Fig.~\ref{fig:scenario}\ifarxiv\ and Sec.~\ref{sec:scenario}\fi) was introduced in~\cite{SGGP}. There it was shown that by modifying the input distributions of the Bobs, so that one of the inputs is highly favoured, an unbounded number of Bobs could each have an expected CHSH violation~\cite{CHSH} with the single Alice who measures once. However, the authors also mentioned numerical evidence suggesting that if the input distributions were not modified (each Bob chooses a binary input uniformly at random) then at most two Bobs would be able to have an expected CHSH violation with Alice.

By constructing an explicit measurement strategy, we show that, contrary to what was previously thought, there is no bound on the number of independent Bobs (with uniform inputs) that can have an expected violation of the CHSH inequality with Alice. We exhibit a class of initial two-qubit entangled states that are capable of achieving an unbounded number of violations, which includes all pure two-qubit entangled states. A previous work~\cite{MMH} claimed a proof that at most two unbiased Bobs could achieve an expected CHSH violation with a single Alice, in line with the numerical evidence mentioned in~\cite{SGGP}. However, within~\cite{MMH} there is an implicit assumption that the sharpness of the two measurements that each Bob used is equal. 
Equal sharpness was also taken in~\cite{SDHSGB} to show that at most two Bobs can violate \emph{any} 2-outcome Bell inequality with Alice in this scenario when starting with a maximally entangled state. 
In the present \ifarxiv paper \else Letter \fi we show that these limitations can be overcome by using measurements with unequal sharpness, highlighting the advantage of considering the most general measurement strategies.

\ifarxiv\section{The sequential CHSH scenario}\label{sec:scenario}\else\bigskip\noindent{\it The sequential CHSH scenario.|}\fi
We consider the following measurement scenario wherein a single party (Alice) attempts to share nonlocal correlations with $n$ independent parties (\bobk{1}, \dots, \bobk{n}) using only a single maximally entangled qubit pair, one half of which is passed between the $n$ parties (see Fig.~\ref{fig:scenario}). We denote the binary input and output of Alice by $X$ and $A$ respectively. For each $k \in \bbn$ we denote the binary input and output of \bobk{k} by $\Yk{k}$ and $\Bk{k}$ respectively. 

\begin{figure}[t]
	\begin{tikzpicture}
	\fill[black!20, rounded corners] (0,2.75) rectangle (1.25,4.0);
	\node[] (alice) at (0.625,3.375) {Alice};
	\node[] (alice_input) at (-0.1,4.5) {$X$};
	\node[] (alice_output) at (1.35,4.5) {$A$};
	\draw[->] (alice_input) -- (alice);
	\draw[->] (alice) -- (alice_output); 
	
	\fill[black!20, rounded corners] (0,0) rectangle (1.25,1.25);
	\node[] (bob1) at (0.625,0.625) {\bobk{1}};
	\node[] (bob1_input) at (-0.1, -0.5) {$Y^{(1)}$};
	\node[] (bob1_output) at (1.35, -0.5) {$B^{(1)}$};
	\draw[->] (bob1_input) -- (bob1);
	\draw[->] (bob1) -- (bob1_output);
	
	\fill[pattern=north east lines, pattern color = blue] (1.8,-0.8) rectangle ++(0.1,1.2);
	\fill[pattern=north east lines, pattern color = blue] (1.8,0.8) rectangle ++(0.1,1.2);
	
	\fill[black!20, rounded corners] (2.5,0) rectangle (3.75,1.25);
	\node[] (bob2) at (3.125,0.625) {\bobk{2}};
	\node[] (bob2_input) at (2.4, -0.5) {$Y^{(2)}$};
	\node[] (bob2_output) at (3.85, -0.5) {$B^{(2)}$};
	\draw[->] (bob2_input) -- (bob2);
	\draw[->] (bob2) -- (bob2_output);
	
	\fill[pattern=north east lines, pattern color = blue] (4.3,-0.8) rectangle ++(0.1,1.2);
	\fill[pattern=north east lines, pattern color = blue] (4.3,0.8) rectangle ++(0.1,1.2);
	
	\node[] (dots) at (4.875,0.625) {\dots};
	\node[] at (4.875,-0.5) {\dots};
	
	\fill[pattern=north east lines, pattern color = blue] (5.3,-0.8) rectangle ++(0.1,1.2);
	\fill[pattern=north east lines, pattern color = blue] (5.3,0.8) rectangle ++(0.1,1.2);
	
	\fill[black!20, rounded corners] (6,0) rectangle (7.25,1.25);
	\node[] (bobn) at (6.625,0.625) {\bobk{n}};
	\node[] (bobn_input) at (5.9, -0.5) {$Y^{(n)}$};
	\node[] (bobn_output) at (7.35, -0.5) {$B^{(n)}$};
	\draw[->] (bobn_input) -- (bobn);
	\draw[->] (bobn) -- (bobn_output);
	
	\node[] (state) at (-0.8,2.0) {$\rho_{A\Bk{1}}$};
	\draw[->, decorate, decoration={snake,amplitude=.4mm,segment length=2mm,post length=1mm}] (state) -- (alice);
	\draw[->, decorate, decoration={snake,amplitude=.4mm,segment length=2mm,post length=1mm}] (state) -- (bob1);
	
	\draw[->, decorate, decoration={snake,amplitude=.4mm,segment length=2mm,post length=1mm}] (bob1) -- (bob2);
	\draw[->, decorate, decoration={snake,amplitude=.4mm,segment length=2mm,post length=1mm}] (bob2) -- (dots);
	\draw[->, decorate, decoration={snake,amplitude=.4mm,segment length=2mm,post length=1mm}] (dots) -- (bobn);
	\end{tikzpicture}
	\caption{A schematic of the considered sequential CHSH scenario. All random variables $X,A,Y^{(i)},B^{(i)}$ for $i = 1,\dots,n$ have only two outcomes. A quantum state $\rho_{A\Bk{1}}$ is initially shared between Alice and \bobk{1}. After \bobk{1} has performed his randomly selected measurement and recorded the outcome he passes the qubit post-measurement state to \bobk{2} who repeats this process. Only the qubit post-measurement states are sent to the next Bob (the classical information regarding the measurement inputs and outputs are not conveyed). Each Bob also knows his position in the sequence.}
	\label{fig:scenario}
\end{figure}
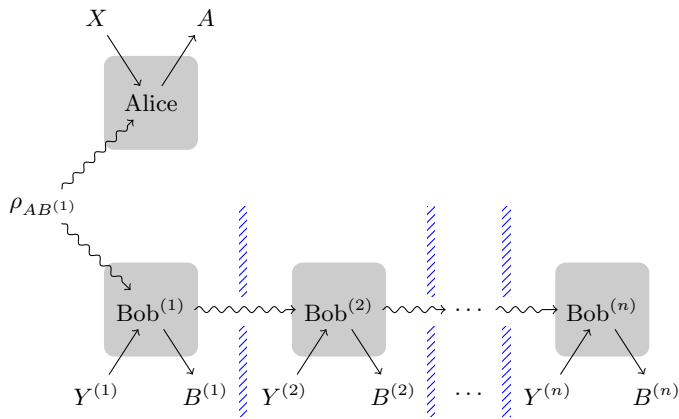

To begin, a two-qubit state $\rho_{A\Bk{1}}$ is shared between Alice and \bobk{1}. \bobk{1} then proceeds by choosing a uniformly random input, performing the corresponding measurement and recording the outcome. The post-measurement state is then sent to \bobk{2}. Suppose \bobk{1} performed the measurement according to $\Yk{1}=y$ and received the outcome $\Bk{1}=b$. The post-measurement state can be described by the L\"uders rule\footnote{Using the L\"uders rule is optimal in terms of the information retained in the post-measurement state|see \ifarxiv Appendix~\ref{app:opt}\else the Supplemental Material\fi.}
\begin{equation}
\rho_{A\Bk{1}} \mapsto \left(\id \otimes \sqrt{\eff{b}{y}^{(1)}} \right) \rho_{A\Bk{1}} \left(\id \otimes \sqrt{\eff{b}{y}^{(1)}} \right),
\end{equation}
where $\eff{b}{y}^{(1)}$ is the POVM effect corresponding to outcome $b$ of \bobk{1}'s measurement for input $y$. As each Bob is assumed to act independently of all previous Bobs, \bobk{2} is ignorant of the values of $\Yk{1}$ and $\Bk{1}$. Thus, the state shared between Alice and \bobk{2} is averaged over the inputs and outputs of \bobk{1}, i.e.
\begin{equation}\label{eq:update_rule}
\rho_{A\Bk{2}} = \frac12 \sum_{\bk{1},\yk{1}} \left(\id \otimes \sqrt{\eff{\bk{1}}{\yk{1}}^{(1)}} \right) \!\rho_{A\Bk{1}}\! \left(\id \otimes \sqrt{\eff{\bk{1}}{\yk{1}}^{(1)}} \right).
\end{equation} 
Repeating this process, we can compute the state shared between Alice and \bobk{k}.

We are concerned with the expected CHSH value between Alice and each of the Bobs independently. Given a conditional probability distribution $p_{AB^{(k)}|XY^{(k)}}$ on the binary random variables $A$, $B^{(k)}$, $X$ and $Y^{(k)}$, the \emph{CHSH value} is defined as 
\begin{equation}
\begin{aligned}
\chshexp{k} :=&\, 2[\,p(A=B^{(k)}|00) + p(A=B^{(k)}|01)  \\
 +&\, p(A=B^{(k)}|10) + p(A\neq B^{(k)}|11) - 2].
\end{aligned}
\end{equation}
When Alice and \bobk{k} behave classically (with shared randomness) or do not share entanglement then ${\chshexp{k}\leq2}$, which is the CHSH inequality.

The aim of this work is to explain how, given $n\in \bbn$, we can define a sequence of pairs of POVMs for \bobk{1}, \dots, \bobk{n} such that $\chshexp{k} > 2$ for all $k = 1,\dots,n$. That is, by initially sharing two qubits in an appropriate entangled state, there is no bound on the number of independent Bobs that can expect to violate the CHSH inequality with a single Alice.

\ifarxiv\section{The measurement strategy}\else\bigskip\noindent{\it The measurement strategy.|}\fi
This uses two-outcome POVMs, $\{E,\id-E\}$, where $E$ has the form $E=\tfrac12\left(\id+\gamma\sigma_{\bm{r}}\right)$. Here $\bm{r}\in\bbr^3$ with $\|\bm r\|=1$, $\gamma \in [0,1]$ is the \emph{sharpness} of the measurement, and $\sigma_{\bm{r}}=r_1\sigma_1+r_2\sigma_2+r_3\sigma_3$, where $\{\sigma_1,\sigma_2,\sigma_3\}$ are the Pauli operators. Since there are only two outcomes, a single effect is enough to define the POVM. Note that if $\gamma=1$ the measurement is sharp (projective), and the post-measurement state is unentangled, while if $\gamma=0$, the measurement ignores the state and is equivalent to an unbiased coin toss.

Our measurement strategy for nonlocality sharing between Alice and $n$ Bobs is fully specified by $n + 1$ parameters. In this strategy Alice's POVMs are defined by the effects
\begin{align}\label{eq:alice_measurements}
A_{0|0} &:= \tfrac12(\id + \cos(\theta) \sigma_3 + \sin(\theta) \sigma_1)\quad\text{and}\\
A_{0|1} &:= \tfrac12(\id + \cos(\theta) \sigma_3 - \sin(\theta) \sigma_1)
\end{align}
for some $\theta \in(0, \pi/4]$. For each $k = 1, \dots, n$, \bobk{k}'s POVMs are defined by the effects
\begin{align}\label{eq:Bob_measurements}
B_{0|0}^{(k)} &:= \tfrac12(\id+ \sigma_3)\quad\text{and}\\
B_{0|1}^{(k)} &:= \tfrac12(\id +\gamma_k \sigma_1).
\end{align}
For these measurements and with an initial state $\rho_{A\Bk{1}} = \outer{\Psi}$ with $\ket{\Psi} = \tfrac{1}{\sqrt{2}}(\ket{00} + \ket{11})$, the expected CHSH value between Alice and \bobk{k} is given by
\begin{equation}\label{eq:chsh}
\chshexp{k} = 2^{2-k} \left(\gamma_k \sin(\theta) + \cos(\theta) \prod_{j=1}^{k-1} \left(1+ \sqrt{1-\gamma_j^2}\right)\right).
\end{equation}
(see \ifarxiv Appendix~\ref{app:CHSH_deriv}\else the Supplemental Material\fi). 

\ifarxiv\section{An unbounded number of violations}\label{sec:unbound}\else\bigskip\noindent{\it An unbounded number of violations.|}\fi
Our aim is to choose the values of $\gamma_k$ and $\theta$ to achieve $\chshexp{k}>2$ for arbitrarily many values of $k$. Using~\eqref{eq:chsh}, we require
\begin{equation}\label{eq:violation_constraint}
\gamma_k > \frac{2^{k-1} - \cos(\theta) \prod_{j=1}^{k-1} \left(1 + \sqrt{1-\gamma_j^2}\right)}{\sin(\theta)}.
\end{equation}
Our strategy is as follows. Let $\epsilon > 0$, $\gamma_1(\theta)=(1+\epsilon)\tfrac{1-\cos(\theta)}{\sin(\theta)}$ and for $k\in\{2,\ldots,n\}$ recursively set
\begin{equation}\label{eq:gammak_alt}
\!\!\gamma_k(\theta)=\begin{cases}(1+\epsilon)\frac{2^{k-1}-\cos(\theta)P_k}{\sin(\theta)}&\text{if }\gamma_{k-1}(\theta)\in(0,1)\\
    \infty&\text{otherwise}\end{cases}\!,
\end{equation}
where $P_k=\prod_{j=1}^{k-1} \left(1 + \sqrt{1-\gamma_j^2(\theta)}\right)$, and $\gamma_k(\theta)=\infty$ indicates that violation of $\chshexp{k}>2$ is not possible for this $k$.


The following lemma shows that the sharpness parameters of such a measurement strategy (when they are finite) are strictly increasing, i.e., $0 < \gamma_1(\theta) < \gamma_2(\theta) < \dots$. 
\begin{lemma}\label{lem:gammak_increasing}
	Let $\theta \in (0, \pi/4]$ and $\epsilon > 0$. Then the sequence $(\gamma_k(\theta))_{k \in \bbn}$ as defined by~\eqref{eq:gammak_alt} 
	with $\gamma_1(\theta) = (1+\epsilon)\tfrac{1-\cos(\theta)}{\sin(\theta)}$, is positive and increasing. Moreover, the subsequence consisting of all finite terms is a strictly increasing sequence. 
\end{lemma}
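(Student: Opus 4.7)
The plan is to argue by induction on $k$, separating the finite-valued portion of the sequence from any eventual $\infty$-tail.

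For the base case, note that $\theta\in(0,\pi/4]$ makes both $\sin(\theta)$ and $1-\cos(\theta)$ strictly positive, so $\gamma_1(\theta)>0$.

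For the inductive step, I would introduce the shorthand $f_k:=(2^{k-1}-\cos(\theta)P_k)/\sin(\theta)$, so that $\gamma_k=(1+\epsilon)f_k$ whenever $\gamma_k$ is finite. Assuming inductively that $\gamma_1,\dots,\gamma_k$ are finite and positive with $\gamma_k\in(0,1)$ (the branch under which $\gamma_{k+1}$ is itself finite), the recursion $P_{k+1}=P_k(1+\sqrt{1-\gamma_k^2})$ gives the telescoping identity
\begin{equation*}
f_{k+1}-f_k=\frac{2^{k-1}-\cos(\theta)\,P_k\sqrt{1-\gamma_k^2}}{\sin(\theta)}.
\end{equation*}
I would then verify that this difference is strictly positive using three elementary bounds: $P_k\leq 2^{k-1}$ since each of its $k-1$ factors lies in $[1,2]$; $\sqrt{1-\gamma_k^2}\leq 1$; and $\cos(\theta)<1$ strictly, since $\theta>0$. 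Combined, these give $\cos(\theta)P_k\sqrt{1-\gamma_k^2}<2^{k-1}$, so $f_{k+1}>f_k$ and hence $\gamma_{k+1}>\gamma_k>0$.

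To close the argument, observe that as soon as some $\gamma_k\geq 1$, the piecewise definition~\eqref{eq:gammak_alt} forces $\gamma_j=\infty$ for all $j>k$. The finite terms therefore form an initial segment on which the induction above furnishes strict monotonicity and positivity, while the $\infty$-tail is weakly increasing by default; together these give the stated positivity, the overall weak monotonicity, and the strict monotonicity on the finite subsequence. The proof is largely routine; the only point demanding a little care is isolating the $\gamma_k\in(0,1)$ branch before invoking $\sqrt{1-\gamma_k^2}$, which is built directly into the piecewise definition of the recursion.
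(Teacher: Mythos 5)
Your proof is correct and follows essentially the same route as the paper's: an induction along the finite initial segment whose key input is that each factor $1+\sqrt{1-\gamma_j^2}$ lies strictly between $1$ and $2$, with the $\infty$-tail handled trivially. The only cosmetic difference is that you bound the difference $f_{k+1}-f_k$ (which needs the extra observation $P_k\leq 2^{k-1}$), whereas the paper bounds the ratio $\gamma_k/\gamma_{k-1}$, where $P_{k-1}$ cancels, and thereby gets the slightly stronger conclusion $\gamma_k>2\gamma_{k-1}$.
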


In addition to the sequence being monotonically increasing, each term in the sequence also has a vanishing limit as $\theta$ approaches $0$. 
\begin{lemma}\label{lem:gammak_limit}
	Let $\epsilon > 0$ and let $(\gamma_n(\theta))_{n \in \bbn}$ be the sequence defined by~\eqref{eq:gammak_alt} 
		with $\gamma_1(\theta) = (1+\epsilon)\tfrac{1-\cos(\theta)}{\sin(\theta)}$. Then for any $n \in \bbn$ there exists some $\theta_n \in (0,\pi/4]$ such that for all $k \leq n$  and $\theta \in (0, \theta_n)$, \ifarxiv
		\begin{equation}\else $\fi
		\gamma_k(\theta) < 1
		\ifarxiv.\end{equation}\else$.
		
		\fi Moreover, we have \ifarxiv
		\begin{equation}\else $\fi
		\lim_{\theta \rightarrow 0^+}\gamma_n(\theta) = 0
		\ifarxiv\end{equation}\else$ \fi
              for all $n \in \bbn$.
            \end{lemma}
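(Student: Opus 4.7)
The plan is to prove the limit $\lim_{\theta \to 0^+}\gamma_n(\theta) = 0$ by induction on $n$, and more precisely to establish the stronger quantitative statement $\gamma_n(\theta) = O(\theta)$ as $\theta \to 0^+$. The existence of $\theta_n$ in the first half of the lemma then follows by choosing, for each $k \leq n$, a threshold $\theta_k' \in (0, \pi/4]$ with $\gamma_k(\theta) < 1$ on $(0, \theta_k')$, and setting $\theta_n = \min_{k \leq n}\theta_k'$.

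The base case $n=1$ is immediate from the half-angle identity $\gamma_1(\theta) = (1+\epsilon)\tan(\theta/2) = O(\theta)$. For the inductive step, suppose $\gamma_j(\theta) = O(\theta)$ for all $j \leq n-1$. Then for $\theta$ sufficiently small, combining positivity from Lemma~\ref{lem:gammak_increasing} with the smallness provided by the hypothesis, all of $\gamma_1(\theta), \ldots, \gamma_{n-1}(\theta)$ lie in $(0,1)$, so the top branch of~\eqref{eq:gammak_alt} applies:
\[
\gamma_n(\theta) = (1+\epsilon)\,\frac{2^{n-1} - \cos(\theta)\,P_n(\theta)}{\sin(\theta)}.
\]
I would decompose the numerator as
\[
2^{n-1} - \cos(\theta)\,P_n(\theta) = 2^{n-1}\bigl(1-\cos(\theta)\bigr) + \cos(\theta)\,\bigl(2^{n-1} - P_n(\theta)\bigr).
\]
Dividing the first summand by $\sin(\theta)$ gives $2^{n-1}\tan(\theta/2) = O(\theta)$. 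For the second, I would bound the difference of products via the two elementary inequalities $1 - \prod_i x_i \leq \sum_i (1-x_i)$ for $x_i \in [0,1]$ and $1 - \sqrt{1-x^2} \leq x^2$ for $x \in [0,1]$, which together yield
\[
2^{n-1} - P_n(\theta) \leq 2^{n-2}\sum_{j=1}^{n-1}\gamma_j(\theta)^2 = O(\theta^2)
\]
by the induction hypothesis. Dividing by $\sin(\theta) = \Theta(\theta)$ produces another $O(\theta)$ contribution, so $\gamma_n(\theta) = O(\theta)$. In particular $\gamma_n(\theta) \to 0$, which also supplies the threshold needed to close the induction.

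I do not foresee any serious obstacle. The key intuition is that each factor $1+\sqrt{1-\gamma_j^2}$ differs from $2$ by an amount quadratic in $\gamma_j$, so $P_n(\theta)$ approaches $2^{n-1}$ strictly faster than $\sin(\theta) \to 0$. The only bookkeeping required is the telescoping bound on the difference of products, which is entirely elementary.
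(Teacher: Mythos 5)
Your proof is correct, and it reaches the same essential estimate as the paper — namely that the numerator $2^{n-1}-\cos(\theta)P_n(\theta)$ is $O(\theta^2)$ because $1-\sqrt{1-\gamma_j^2}\leq\gamma_j^2$ — but the bookkeeping is genuinely different. The paper replaces $(\gamma_k)$ by an explicit majorant sequence $(p_k)$ obtained from the substitutions $\sqrt{1-x^2}\to 1-x^2$, $\cos\theta\to 1-\theta^2/2$, $\sin\theta\to\theta/2$, and then argues by induction that each $p_k$ is an \emph{odd polynomial in $\theta$ with no constant term} (the numerator is an even polynomial with no constant term, so the division by $\theta$ stays polynomial), which immediately gives $p_k(\theta)\to 0$ and $p_k<1$ near the origin; the conclusion for $\gamma_k$ follows by squeezing. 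You instead run the induction directly on $\gamma_n$, splitting the numerator as $2^{n-1}(1-\cos\theta)+\cos\theta\,(2^{n-1}-P_n)$ and controlling the second term by $1-\prod_i x_i\leq\sum_i(1-x_i)$ together with $1-\sqrt{1-x^2}\leq x^2$, yielding $\gamma_n(\theta)=O(\theta)$ with no auxiliary sequence. Your route is arguably more self-contained and makes the $O(\theta)$ rate explicit at every step; the paper's polynomial-majorant construction is slightly heavier here but pays off later, since the same sequences $p_k^{\uparrow},p_k^{\downarrow}$ and their leading coefficients are reused in Appendix~\ref{app:theta} to estimate how fast $\theta_n$ must shrink. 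The only point to make fully explicit in your write-up is that the inductive hypothesis must carry not just $\gamma_j(\theta)=O(\theta)$ but also a common interval $(0,\delta_{n-1})$ on which all of $\gamma_1,\dots,\gamma_{n-1}$ lie in $(0,1)$, so that the top branch of~\eqref{eq:gammak_alt} is the one in force at step $n$; you note this in passing ("supplies the threshold needed to close the induction"), and it is indeed all that is required.
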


Using the two lemmas above, the proofs of which are given in \ifarxiv{Appendices~\ref{app:lem1} and~\ref{app:lem2}}\else the Supplemental Material\fi, we can now state and prove our main result.
\begin{theorem}\label{thm:main}
	For each $n \in \bbn$, there exists a sequence $(\gamma_k(\theta))_{k=1}^n$ and a $\theta_n \in (0, \pi/4]$ such that $\chshexp{k} > 2$ for all $k = 1,\dots,n$.
	\begin{proof}
	By Lemma~\ref{lem:gammak_limit}, we know that there exists some $\theta_n \in (0, \pi/4]$ such that $\gamma_n(\theta_n) < 1$. By Lemma~\ref{lem:gammak_increasing}, we then also have that $0< \gamma_1(\theta_n) < \gamma_2(\theta_n) < \dots < \gamma_n(\theta_n) < 1$.  As such, this value of $\theta_n$ defines a sequence of $n$ valid sharpness parameters which, by construction, each satisfy the conditions \eqref{eq:violation_constraint} and thus each achieve an expected CHSH value greater than $2$. 
	\end{proof}
\end{theorem}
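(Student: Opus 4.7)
The plan is a two-step combination of Lemmas~\ref{lem:gammak_increasing} and~\ref{lem:gammak_limit}, which together do all the analytic work. First, I would use Lemma~\ref{lem:gammak_limit} to pick, for the given $n$, some $\theta_n\in(0,\pi/4]$ such that $\gamma_n(\theta_n)<1$. In particular, this forces $\gamma_n(\theta_n)$ to be finite, since the ``infinity'' branch in the definition~\eqref{eq:gammak_alt} is incompatible with $\gamma_n(\theta_n)<1$.

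Next, I would invoke Lemma~\ref{lem:gammak_increasing} to deduce that all earlier terms are also finite and strictly smaller than $\gamma_n(\theta_n)$. Indeed, if any $\gamma_k(\theta_n)$ with $k<n$ were assigned the value $\infty$ by the recursion, then by definition every subsequent term, including $\gamma_n(\theta_n)$, would be $\infty$ as well, contradicting what was just established. Hence $0<\gamma_1(\theta_n)<\gamma_2(\theta_n)<\cdots<\gamma_n(\theta_n)<1$, providing a genuine sequence of $n$ sharpness parameters lying in $(0,1)$.

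Finally, I would verify that these parameters actually produce the desired CHSH violations. By construction, the defining formula~\eqref{eq:gammak_alt} sets each finite $\gamma_k(\theta_n)$ equal to exactly $(1+\epsilon)$ times the threshold appearing on the right-hand side of~\eqref{eq:violation_constraint}. Since $1+\epsilon>1$, the strict inequality~\eqref{eq:violation_constraint} is satisfied for every $k\le n$, and substituting this back into the CHSH expression~\eqref{eq:chsh} immediately yields $\chshexp{k}>2$ for each $k$.

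The main obstacle in the argument is not in the theorem itself but in the two lemmas it invokes, which are taken as given here: Lemma~\ref{lem:gammak_increasing} requires controlling a coupled nonlinear recursion, and Lemma~\ref{lem:gammak_limit} requires showing that every truncated sequence collapses to $0$ in the limit $\theta\to 0^+$ despite the geometrically growing factor $2^{k-1}$ in the numerator of~\eqref{eq:gammak_alt}. Once those ingredients are available, the theorem reduces to essentially a one-line bookkeeping argument.
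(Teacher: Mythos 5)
Your proposal is correct and follows essentially the same route as the paper: apply Lemma~\ref{lem:gammak_limit} to obtain $\theta_n$ with $\gamma_n(\theta_n)<1$, then Lemma~\ref{lem:gammak_increasing} to get $0<\gamma_1(\theta_n)<\cdots<\gamma_n(\theta_n)<1$, and finally observe that the $(1+\epsilon)$ factor in the recursion guarantees~\eqref{eq:violation_constraint} and hence $\chshexp{k}>2$. Your explicit remark that finiteness of $\gamma_n(\theta_n)$ propagates backwards (since an infinite term makes all later terms infinite) is a small point the paper leaves implicit, but the argument is the same.
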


The above theorem shows that an unbounded number of independent Bobs are able to violate the CHSH inequality with a single Alice, by sequentially measuring one half of a maximally entangled pair. 
\begin{figure}[t]
	\includegraphics[scale=0.45]{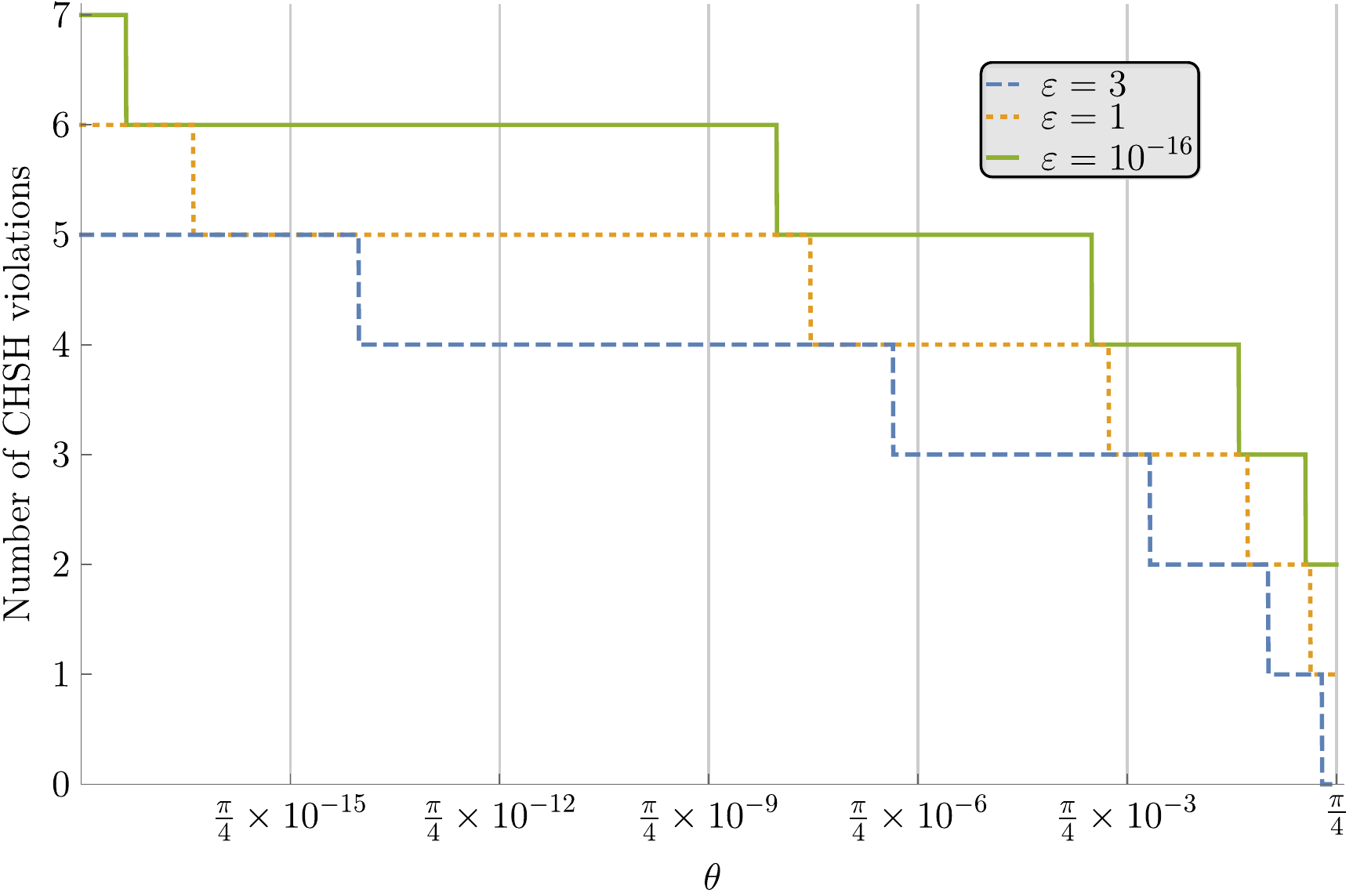}
	\caption{The number of Bobs that can violate CHSH using our strategy as a function of $\theta$. This was computed numerically for several values of $\epsilon$ by finding, for a fixed $\theta$, the maximum $k$ such that $\gamma_k(\theta) <1$. As the $\theta$ axis is plotted with a log-scale, the nature of the plot indicates that the sequence $(\theta_n)$ will likely decrease faster than exponentially in $n$.}
	\label{fig:numerics}
\end{figure}

\begin{remark}\label{rem:theta_n}
	To increase the number $n$ of Bobs violating the CHSH inequality, we require smaller values of $\theta_n \in (0, \pi/4]$. By upper and lower bounding $\gamma_n$ by polynomials of $\theta$, we find evidence to suggest that $\theta_n$ must decrease double-exponentially fast with $n$. A more detailed discussion of this is given in \ifarxiv Appendix~\ref{app:theta}\else the Supplemental Material\fi. Our findings agree with the numerics presented in Fig.~\ref{fig:numerics} where we plot $\theta$ (log-scale) against the number of CHSH violations possible.
\end{remark}

\begin{remark}
  We also investigate the behaviour of the sequence of violations of the CHSH inequality. For the strategy described in this \ifarxiv paper\else Letter\fi, we can bound $\chshexp{k}< 2^{2-k} \left(\gamma_k(\theta_n) \theta_n + 2^{k-1}\right)<  2 + 2^{2-k} \theta_n$ for $k=1,\dots,n$, where we have used $1+ \sqrt{1-\gamma_j^2(\theta_n)}<2$ for each $j=1,\dots,n$ and $\gamma_k(\theta_n)<1$. Therefore, the sizes of the CHSH violations must decrease as least as fast as $\theta_n$.
\end{remark}

\begin{remark}
	The analysis of this section was restricted to the setting where Alice and \bobk{1} initially share a maximally entangled qubit pair. However, the measurement strategy presented can be readily extended to a larger class of two qubit states. In particular, if Alice and \bobk{1} initially share any pure two-qubit entangled state then it is also possible to define a measurement strategy that gives rise to arbitrarily many CHSH violations in the scenario under consideration. The larger class of states and their respective measurement strategies are discussed in \ifarxiv Appendix~\ref{app:CHSH_deriv}\else the Supplemental Material\fi.
\end{remark}

\ifarxiv\section{Discussion}\else\bigskip\noindent{\it Discussion.|}\fi
In this work we have shown that it is possible for arbitrarily many independent Bobs to violate the CHSH inequality with a single Alice using only a single maximally entangled qubit pair and while making uniformly distributed inputs, answering the problem posed in~\cite{SGGP} in the negative and overturning a claim in~\cite{MMH} due to an implicit assumption made there. In addition, we showed that the same can be achieved if the parties initially share any pure, entangled two-qubit state. We also provided bounds on the size of the violations achievable with our strategy, giving evidence that they decrease double-exponentially fast with the number of Bobs.

The magnitude of the violations suggests that our measurement strategies, whilst able to achieve an unbounded number of violations, are not suited to device-independent tasks like randomness expansion in which the quantity of randomness certifiable increases with the size of the CHSH violation.  However, it is known that by using the tilted CHSH inequality~\cite{AMP2012} and allowing one party to choose from an exponentially large number of measurements, an unbounded number of random bits can (in principle, but not in a robust way) be certified from a single pair of entangled qubits in a device-independent manner~\cite{CJAHWA}. More recently, in~\cite{BBS} it was shown that more than two bits of local randomness can be robustly certified by performing sequential measurements on a two-qubit system and with each party choosing from at most three measurements. Further work is needed to understand the limitations on robust device-independent randomness expansion from a single pair of entangled qubits.

The scenario of the present \ifarxiv paper \else Letter \fi has also been studied for the related tasks of steering~\cite{SDMM,SDHSGB} and entanglement witnessing~\cite{BMSS}.  Since the presence of non-locality implies the possibility of steering and the presence of entanglement, our work shows that it is also possible to achieve both of these with arbitrarily many independent Bobs using only a single pair of entangled qubits. This goes against some of the results in~\cite{BMSS,SDMM,SDHSGB}, and suggests that it is worth rethinking others that study the space of such correlations~\cite{DGSMM} or look at other Bell inequalities~\cite{KumariPan} or more parties~\cite{SDSSM,MDGRM} after removing the assumption that both measurements used by each party have the same sharpness. 

Several interesting questions remain open. Firstly, can we fully characterise the set of two-qubit states that allow for an unbounded number of CHSH violations? Here, we gave a sufficient condition for two-qubit state to achieve unbounded violations. In the standard case of a single Bob the set of states for which a violation of the CHSH inequality is possible has been fully characterised~\cite{HHH95}. It would be interesting to know whether the conditions presented in \ifarxiv Theorem~\ref{thm:main_generalised} \else the Supplemental Material \fi are both necessary and sufficient for unbounded violations.

Second, in the scenario analysed here we only permit a single qubit to be transmitted between subsequent Bobs. We could also consider the setting in which we additionally allow the Bobs to share classical information, i.e., the inputs and outputs of previous Bobs. Such a setting would open up the possibility of the Bobs using an adaptive strategy. In~\cite{sequential_experiment_1} some steps were made in this direction but the authors' strategy required that the inputs and outputs of the Bobs were also sent to Alice before she made her measurement. Whilst our work here implies that it is also possible in the classically-assisted setting to achieve unbounded violations, it would be interesting to know if the classical communication could be used to produce larger violations and more noise-tolerant strategies. 

Finally, in a different direction, it would be interesting to explore the scenario where we also consider multiple Alices. In particular, how many expected pairwise violations could a sequence of Alices and Bobs achieve from a single pair of qubits if they act independently?

\ifarxiv\begin{acknowledgements}\else\bigskip\noindent{\it Acknowledgements.|}\fi
	PB thanks Mirjam Weilenmann for feedback on an earlier version of the manuscript. This work was supported by EPSRC's Quantum Communications Hub (grant numbers EP/M013472/1 and EP/T001011/1), an EPSRC First Grant (grant number EP/P016588/1) and the French National Research Agency via Project No.\ ANR-18-CE47-0011 (ACOM). The majority of this work was carried out whilst PB was at the University of York.
\ifarxiv\end{acknowledgements}\fi

%

\onecolumngrid
\appendix

\section{Optimality of the L\"uders rule}\label{app:opt}
The following lemma shows that use of the L\"uders rule is optimal in terms of the information retained in the post-measurement state.
\begin{lemma}
  Let $\{E_i\}_{i=1}^s$ be a POVM on $\cH$. Any instrument $\{\cE_i\}_{i=1}^s$ that implements this POVM can be constructed by performing the L\"uders measurement followed by a quantum channel that depends on the outcome. 
\end{lemma}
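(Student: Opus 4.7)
The plan is to take an arbitrary instrument $\{\cE_i\}$ implementing $\{E_i\}$, write each $\cE_i$ in Kraus form, and then factor $\sqrt{E_i}$ out of every Kraus operator; what is left will (after being completed to a trace-preserving set) define the outcome-dependent channel $\Phi_i$ for which $\cE_i = \Phi_i \circ \cE_i^{L}$, where $\cE_i^{L}(\rho)=\sqrt{E_i}\rho\sqrt{E_i}$ denotes the $i$th branch of the L\"uders instrument.

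First, I would invoke Kraus' theorem to write $\cE_i(\rho)=\sum_j K_{ij}\rho K_{ij}^\dagger$; the requirements that $\sum_i \cE_i$ be trace preserving and that $\cE_i$ produce outcome $i$ with probability $\tr{E_i\rho}$ translate into $\sum_j K_{ij}^\dagger K_{ij} = E_i$ for each $i$. In particular each term satisfies $K_{ij}^\dagger K_{ij} \le E_i$, which forces $K_{ij}$ to annihilate $\ker E_i$. Letting $P_i$ be the orthogonal projection onto the support of $E_i$, I would then define $V_{ij} := K_{ij}\sqrt{E_i}^{\,+}$ using the Moore--Penrose pseudoinverse; a short check shows $V_{ij}\sqrt{E_i} = K_{ij}$ on all of $\cH$ and that $V_{ij}$ itself is supported on $P_i\cH$.

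Next, substituting $K_{ij}=V_{ij}\sqrt{E_i}$ into the POVM condition yields $\sqrt{E_i}\bigl(\sum_j V_{ij}^\dagger V_{ij}\bigr)\sqrt{E_i}=E_i$, from which $\sum_j V_{ij}^\dagger V_{ij} = P_i$ follows by the invertibility of $\sqrt{E_i}$ on its support. I would then turn the $V_{ij}$ into a trace-preserving channel $\Phi_i$ by adjoining extra Kraus operators that supply the missing $\id-P_i$. A concrete choice is to fix any state $\ket{\phi_i}\in\cH$ together with an orthonormal basis $\{\ket{e_k}\}$ of $\ker E_i$ and adjoin $V_{i,0,k} := \outer{\phi_i}{e_k}$. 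Since $\ket{e_k}\in\ker\sqrt{E_i}$, these extra Kraus operators vanish when composed with the L\"uders branch, so $\Phi_i(\sqrt{E_i}\rho\sqrt{E_i}) = \sum_j V_{ij}\sqrt{E_i}\rho\sqrt{E_i} V_{ij}^\dagger = \sum_j K_{ij}\rho K_{ij}^\dagger = \cE_i(\rho)$, which is the required identity.

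The only delicate point---and the main place I would slow down when writing this up---is the proper treatment of $\ker E_i$ when defining $V_{ij}$ via the pseudoinverse. Establishing $V_{ij}\sqrt{E_i}=K_{ij}$ globally (and not just on $P_i\cH$) hinges on the observation that $K_{ij}^\dagger K_{ij}\le E_i$ forces $K_{ij}$ to vanish on $\ker E_i$; after that, both the completion by the $V_{i,0,k}$ and the final composition check are routine operator algebra.
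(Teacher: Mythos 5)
Your proposal is correct and is essentially the paper's own argument: both factor each Kraus operator as $K_{ij}=V_{ij}\sqrt{E_i}$ via the pseudoinverse and complete $\{V_{ij}\}_j$ to a trace-preserving channel by adding Kraus operators supported on $\ker E_i$ (the paper adjoins the single operator $\id-\Pi_{E_i}$ where you adjoin rank-one operators $\outer{\phi_i}{e_k}$; either works). The one substantive point you flag---that $K_{ij}^\dagger K_{ij}\le E_i$ forces $K_{ij}$ to vanish on $\ker E_i$, so that $V_{ij}\sqrt{E_i}=K_{ij}$ holds globally---is exactly the point the paper isolates (via a small SVD-based corollary), and your direct operator-inequality justification of it is sound.
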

\begin{proof}
  Let $\cS(\cH)$ be the set of density operators on $\cH$. If $\{K_i^j\}_j$ are the Kraus operators representing the channel $\cE_i$, then, when acting on a state $\rho\in\cS(\cH)$, the post-measurement state on obtaining outcome $i$ is proportional to $\cE_i(\rho)=\sum_jK_i^j\rho(K_i^j)^\dagger$.

  Using the L\"uders update, the post measurement state on receiving outcome $i$ is proportional to $\sqrt{E_i}\rho\sqrt{E_i}$.  We now consider the channel with Kraus operators $\{J_i^j\}_j\cup\{\id-\Pi_{E_i}\}$, where $\Pi_{E_i}$ is the projector onto the support of $E_i$, $J_i^j=K_i^jE_i^{-1/2}$ and the inverse is interpreted as the pseudo-inverse (so that $E_i^{-1/2}E_iE_i^{-1/2}=\Pi_{E_i}$).  These form a valid quantum channel since
  $$\sum_j(J^j_i)^\dagger J^j_i+\id-\Pi_{E_i}=\sum_jE_i^{-1/2}(K_i^j)^\dagger K_i^jE_i^{-1/2}+\id-\Pi_{E_i}=E_i^{-1/2}E_iE_i^{-1/2}+\id-\Pi_{E_i}=\id\,,$$
where we have used $\sum_j(K_i^j)^\dagger K_i^j=E_i$ which follows because $\cE_i$ has corresponding POVM element $E_i$. Furthermore, the concatenation of the L\"uders measurement with this channel corresponds to the map
$$\rho\mapsto(\id-\Pi_{E_i})\sqrt{E_i}\rho\sqrt{E_i}(\id-\Pi_{E_i})+\sum_jJ^j_i\sqrt{E_i}\rho\sqrt{E_i}(J^j_i)^\dagger=\sum_jK_i^j\Pi_{E_i}\rho\Pi_{E_i}(K_i^j)^\dagger=\sum_jK_i^j\rho(K_i^j)^\dagger=\cE_i(\rho)\,,$$
where we have used $K_i^j\Pi_{E_i}=K_i^j$ for all $j$, which follows from Corollary~\ref{cor:KPi} noting that $\Pi_{E_i}\geq\Pi_{(K_i^j)^\dagger K_i^j}$.
In other words, using the L\"uders rule followed by the channel with Kraus operators $\{J^j_i\}_j\cup\{\id-\Pi_{E_i}\}$ is equivalent to the relevant instrument channel.
\end{proof}

\begin{lemma}\label{lem:KPi}
Let $K$ be a linear map from $\cH_A$ to $\cH_B$ and $\Pi_{K^\dagger K}$ be the projector onto the support of $K^\dagger K$. Then $K\Pi_{K^\dagger K}=K$.
\end{lemma}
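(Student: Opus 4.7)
The plan is to reduce the claim to the standard linear-algebra identity $\ker K=\ker(K^\dagger K)$ and then finish with an orthogonal-decomposition argument. Since $K^\dagger K$ is positive semidefinite, it is self-adjoint, so $\cH_A$ splits as $\ker(K^\dagger K)\oplus\mathrm{supp}(K^\dagger K)$, and $\Pi_{K^\dagger K}$ is by definition the orthogonal projector onto the second summand.

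The first step is to establish $\ker K=\ker(K^\dagger K)$. The inclusion $\ker K\subseteq\ker(K^\dagger K)$ is immediate. For the reverse, suppose $K^\dagger Kv=0$; then $0=\langle v,K^\dagger K v\rangle=\|Kv\|^2$, so $Kv=0$. This is the only place where the specific form $K^\dagger K$ (as opposed to an arbitrary positive operator) is used.

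With this identification in hand, I would take an arbitrary $v\in\cH_A$ and decompose $v=v_1+v_2$ with $v_1\in\mathrm{supp}(K^\dagger K)$ and $v_2\in\ker(K^\dagger K)=\ker K$. By construction $\Pi_{K^\dagger K}v=v_1$, and hence
\begin{equation*}
K\Pi_{K^\dagger K}v = Kv_1 = Kv_1+Kv_2 = Kv,
\end{equation*}
where the last equality uses $Kv_2=0$. Since $v$ was arbitrary, this gives $K\Pi_{K^\dagger K}=K$.

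I do not anticipate any real obstacle here — the statement is essentially a textbook fact. The only subtlety worth flagging is that ``support of $K^\dagger K$'' is well-defined as the orthogonal complement of $\ker(K^\dagger K)$ precisely because $K^\dagger K$ is self-adjoint; without self-adjointness the symbol $\Pi_{K^\dagger K}$ would not unambiguously denote an orthogonal projector, and the proof would not go through unchanged.
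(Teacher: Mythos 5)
Your proof is correct, but it takes a genuinely different route from the paper's. The paper proves the lemma via the singular value decomposition: it first verifies the claim for maps of the diagonal form $\sum_i d_i\ket{i}_B\!\bra{i}_A$, then writes a general $K=UDV$, computes $\Pi_{K^\dagger K}=V^\dagger\Pi_{D^\dagger D}V$, and concludes $K\Pi_{K^\dagger K}=UDVV^\dagger\Pi_{D^\dagger D}V=K$. You instead reduce everything to the basis-free identity $\ker K=\ker(K^\dagger K)$ (proved via $\|Kv\|^2=\langle v,K^\dagger Kv\rangle$) together with the orthogonal decomposition $\cH_A=\ker(K^\dagger K)\oplus\mathrm{supp}(K^\dagger K)$, which is available because $K^\dagger K$ is self-adjoint. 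Your argument is arguably cleaner: it avoids invoking the existence of an SVD and never chooses a basis, and it isolates exactly the one nontrivial fact being used (that $K$ annihilates the kernel of $K^\dagger K$). The paper's computation buys a little extra in return, namely an explicit description of $\Pi_{K^\dagger K}$ in the singular basis, but that is not needed for the lemma or for the Corollary that follows it, so either proof serves the paper equally well.
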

\begin{proof}
  We first suppose $K$ is of the form $\sum_id_i\ket{i}_B\!\bra{i}_A$, where $\{\ket{i}_A\}$ ($\{\ket{i}_B\}$) is an orthonormal basis for $\cH_A$ ($\cH_B$).  We have $K^\dagger K=\sum_i|d_i|^2\ket{i}_A\!\bra{i}_A$ and $\Pi_{K^\dagger K}=\sum_{i:d_i\neq0}\ket{i}_A\!\bra{i}_A$. Thus, $K\Pi_{K^\dagger K}
  =K$.

More generally, let $K=UDV$ be the singular value decomposition of $K$, where $U:\cH_B\to\cH_B$ and $V:\cH_A\to\cH_A$ are unitary and $D:\cH_A\to\cH_B$ is non-zero only on the diagonal.  We have $K^\dagger K=V^\dagger D^\dagger DV$ and hence $\Pi_{K^\dagger K}=V^\dagger\Pi_{D^\dagger D}V$. Thus, $K\Pi_{K^\dagger K}=UDVV^\dagger\Pi_{D^\dagger D}V=UDV=K$, using the first part of the proof.
\end{proof}
\begin{corollary}\label{cor:KPi}
Let $K$ be a linear map from $\cH_A$ to $\cH_B$ and $\Pi$ be a projector on $\cH_A$ such that $\Pi\geq\Pi_{K^\dagger K}$. Then $K\Pi=K$.
\end{corollary}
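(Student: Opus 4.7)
The plan is to use Lemma~\ref{lem:KPi} as the workhorse and reduce the corollary to it by decomposing $\Pi$. Since $\Pi\geq\Pi_{K^\dagger K}$ as positive operators and both are projectors, a standard fact is that the difference $P:=\Pi-\Pi_{K^\dagger K}$ is itself a projector, onto the orthogonal complement (within $\mathrm{range}(\Pi)$) of $\mathrm{range}(\Pi_{K^\dagger K})$. I would first quickly record this: $P^2=\Pi^2-\Pi\Pi_{K^\dagger K}-\Pi_{K^\dagger K}\Pi+\Pi_{K^\dagger K}$, and use that $\Pi\geq\Pi_{K^\dagger K}$ forces $\Pi\Pi_{K^\dagger K}=\Pi_{K^\dagger K}\Pi=\Pi_{K^\dagger K}$, so $P^2=P=P^\dagger$.

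The next step is to show $KP=0$, equivalently $\mathrm{range}(P)\subseteq\ker(K)$. By construction $\mathrm{range}(P)\perp\mathrm{range}(\Pi_{K^\dagger K})$, so $\mathrm{range}(P)\subseteq\ker(\Pi_{K^\dagger K})=\ker(K^\dagger K)$. I would then invoke the standard identity $\ker(K^\dagger K)=\ker(K)$, which is immediate from $\|Kx\|^2=\langle x,K^\dagger Kx\rangle$. Hence $KP=0$.

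Finally, I would combine these ingredients: $K\Pi=K(\Pi_{K^\dagger K}+P)=K\Pi_{K^\dagger K}+KP=K+0=K$, where the penultimate equality applies Lemma~\ref{lem:KPi}. There is no real obstacle here; the only thing worth being careful about is justifying that $\Pi\geq\Pi_{K^\dagger K}$ implies $\Pi\Pi_{K^\dagger K}=\Pi_{K^\dagger K}$ (so that $P$ is a genuine projector rather than merely positive). This is a one-line consequence of the operator inequality applied to any vector $x\in\mathrm{range}(\Pi_{K^\dagger K})$, which gives $\langle x,\Pi x\rangle\geq\langle x,x\rangle$, forcing $\Pi x=x$.
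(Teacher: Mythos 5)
Your proof is correct and follows the route the paper intends (the corollary is stated without proof, as an immediate consequence of Lemma~\ref{lem:KPi} together with the standard fact that $\Pi\geq\Pi_{K^\dagger K}$ for projectors forces $\Pi\Pi_{K^\dagger K}=\Pi_{K^\dagger K}\Pi=\Pi_{K^\dagger K}$). Your decomposition $\Pi=\Pi_{K^\dagger K}+P$ with $KP=0$ is a slightly longer but equivalent packaging of the one-line version $K\Pi=K\Pi_{K^\dagger K}\Pi=K\Pi_{K^\dagger K}=K$, and all the supporting steps you record are sound.
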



\section{Derivation of CHSH expression}\label{app:CHSH_deriv}

In this section we will derive the CHSH value for a more general strategy, recovering the expression \ifarxiv in Equation~\eqref{eq:chsh} \else from the main text \fi as a special case.  The notation here and basis for the strategy is inspired by~\cite{HHH95}. Given a state $\rho_{AB}$ of two qubits, we define $T(\rho_{AB})$ to be the $3\times 3$ matrix whose entries are given by $T_{i,j}(\rho_{AB}) = \tr{\rho_{AB}(\sigma_i \otimes \sigma_j)}$. Considering this matrix for the initial state $\rho_{AB^{(1)}}$ shared between Alice and \bobk{1}, we define $\lambda_0$, $\lambda_1$ be the two largest eigenvalues of $T(\rho_{AB^{(1)}})T^{T}(\rho_{AB^{(1)}})$ and let $\bm{c}_0$, $\bm{c}_1$ be the corresponding orthonormal eigenvectors. We also define $\bm{b}_i = \tfrac{T^{T}(\rho_{AB^{(1)}})\bm{c}_i}{\|T^{T}(\rho_{AB^{(1)}})\bm{c}_i\|}$. Note that for two unit-length vectors $\bm{a}, \bm{b} \in \bbr^{3}$ we have $\tr{\rho(\sigma_{\bm{a}} \otimes\sigma_{\bm{b}})} = (\bm{a}, T(\rho) \bm{b})$ where $(\cdot\,, \cdot)$ is the standard Euclidean inner product for the real vector space $\bbr^3$.

Let the measurement strategy of Alice and \bobk{k} be defined by the effects
\begin{equation}\label{eq:general_strategy}
\begin{aligned}
A_{0|0} &= \tfrac12(\id + \cos(\theta) \sigma_{\bm{c}_0} + \sin(\theta)\sigma_{\bm{c}_1}), \\
A_{0|1} &= \tfrac12(\id + \cos(\theta) \sigma_{\bm{c}_0} - \sin(\theta)\sigma_{\bm{c}_1}), \\
B_{0|0}^{(k)} &= \tfrac12(\id + \sigma_{\bm{b}_0}), \\
B_{0|1}^{(k)} &= \tfrac12(\id + \gamma_k \, \sigma_{\bm{b}_1}).
\end{aligned}
\end{equation}
Defining the \emph{expectation operators} $A_x = A_{0|x} - A_{1|x}$ and $B_y = B_{0|y} - B_{1|y}$, the CHSH value of Alice and \bobk{k} can be written 
\begin{equation}\label{eq:chsh_expression2_app}
\begin{aligned}
\chshexp{k} &= \tr{\rho_{AB^{(k)}}((A_0 + A_1)\otimes B_0^{(k)})} + \tr{\rho_{AB^{(k)}}((A_0 - A_1)\otimes B_1^{(k)})}, \\
&= 2 \cos(\theta) (\bm{c}_0, T(\rho_{AB^{(k)}}) \bm{b}_0) + 2 \gamma_k \sin(\theta) (\bm{c}_1, T(\rho_{AB^{(k)}}) \bm{b}_1)\,.
\end{aligned}
\end{equation}
For the case $k=1$ this gives
\begin{equation}
\begin{aligned}
\chshexp{1} &= 2 \cos(\theta) (\bm{c}_0, T(\rho_{AB^{(1)}}) \bm{b}_0) + 2 \gamma_1 \sin(\theta) (\bm{c}_1, T(\rho_{AB^{(1)}}) \bm{b}_1)\,, \\
&= 2 \cos(\theta) \frac{(\bm{c}_0, T(\rho_{AB^{(1)}}) T^{T}(\rho_{AB^{(1)}}) \bm{c}_0)}{\|T^{T}(\rho_{AB^{(1)}}) \bm{c}_0\|} + 2 \gamma_1 \sin(\theta) \frac{(\bm{c}_1, T(\rho_{AB^{(1)}}) T^{T}(\rho_{AB^{(1)}}) \bm{c}_1)}{\|T^{T}(\rho_{AB^{(1)}}) \bm{c}_1\|}\,, \\
&= 2 \cos(\theta) \|T^{T}(\rho_{AB^{(1)}}) \bm{c}_0\| + 2 \gamma_1 \sin(\theta)\|T^{T}(\rho_{AB^{(1)}}) \bm{c}_1\|, \\
&= 2 \cos(\theta) \sqrt{\lambda_0} + 2 \gamma_1 \sin(\theta)\sqrt{\lambda_1}.
\end{aligned}
\end{equation}
To derive the general expression we will show how to relate $(\bm{c}_0, T(\rho_{AB^{(k)}}) \bm{b}_0)$ to $(\bm{c}_0, T(\rho_{AB^{(1)}}) \bm{b}_0)$ and $(\bm{c}_1, T(\rho_{AB^{(k)}}) \bm{b}_1)$ to $(\bm{c}_1, T(\rho_{AB^{(1)}}) \bm{b}_1)$. 

Let $\rho_{AB^{(k-1)}}$ be the state shared between Alice and \bobk{k-1} prior to \bobk{k-1}'s measurement (and where Alice has not measured). Using the L\"uders update rule the state sent to \bobk{k} is
\begin{equation}\label{eq:tmeasurements_update}
\begin{aligned}
\rho_{AB^{(k)}} &= \frac12 \sum_{b,y} \left(\id \otimes \sqrt{B_{b|y}^{(k)}}\right) \rho_{AB^{(k-1)}} \left(\id \otimes \sqrt{B_{b|y}^{(k)}}\right)  \\
 &= \left(\id \otimes \tfrac12(\id + \sigma_{\bm{b}_0})\right) \rho_{AB^{(k-1)}} \left(\id \otimes \tfrac12(\id + \sigma_{\bm{b}_0})\right) + \left(\id \otimes \tfrac12(\id - \sigma_{\bm{b}_0})\right) \rho_{AB^{(k-1)}} \left(\id \otimes \tfrac12(\id - \sigma_{\bm{b}_0})\right) \\
 &\,+ \left(\id \otimes \sqrt{\tfrac12(\id + \gamma_k \, \sigma_{\bm{b}_1})}\right) \rho_{AB^{(k-1)}} \left(\id \otimes \sqrt{\tfrac12(\id + \gamma_k \, \sigma_{\bm{b}_1})}\right) + \left(\id \otimes \sqrt{\tfrac12(\id - \gamma_k \, \sigma_{\bm{b}_1})}\right) \rho_{AB^{(k-1)}} \left(\id \otimes \sqrt{\tfrac12(\id - \gamma_k \, \sigma_{\bm{b}_1})}\right)  \\
 &=\frac{2 + \sqrt{1-\gamma_{k-1}^2}}{4} \rho_{AB^{(k-1)}} + \frac14 (\id \otimes \sigma_{\bm{b}_0}) \rho_{AB^{(k-1)}} (\id \otimes \sigma_{\bm{b}_0}) + \frac{1-\sqrt{1-\gamma_{k-1}^2}}{4} (\id \otimes \sigma_{\bm{b}_1}) \rho_{AB^{(k-1)}} (\id \otimes \sigma_{\bm{b}_1}). 
\end{aligned}
\end{equation}
The final line follows from a direct calculation using the identity
\begin{equation*}
\sqrt{\tfrac12(\id \pm \gamma_k \, \sigma_{\bm{b}_1})} = \frac{(\sqrt{1+\gamma_k} + \sqrt{1-\gamma_k}) \id \pm (\sqrt{1+\gamma_k} - \sqrt{1-\gamma_k}) \sigma_{\bm{b}_1}}{2\sqrt{2}}.
\end{equation*}
Now consider the quantity $(\bm{c}_0, T(\rho_{AB^{(k)}}) \bm{b}_0)$. Using \eqref{eq:tmeasurements_update} we may write this as 
\begin{equation}
\begin{aligned}
(\bm{c}_0, T(\rho_{AB^{(k)}}) \bm{b}_0) &= \frac{2 + \sqrt{1-\gamma_{k-1}^2}}{4} (\bm{c}_0, T(\rho_{AB^{(k-1)}}) \bm{b}_0) + \frac{\tr{(\id \otimes \sigma_{\bm{b}_0}) \rho_{AB^{(k-1)}} (\id \otimes \sigma_{\bm{b}_0})(\sigma_{\bm{c}_0} \otimes \sigma_{\bm{b}_0})}}{4} \\
&+(1-\sqrt{1-\gamma_{k-1}^2})\frac{\tr{(\id \otimes \sigma_{\bm{b}_1}) \rho_{AB^{(k-1)}} (\id \otimes \sigma_{\bm{b}_1})(\sigma_{\bm{c}_0} \otimes \sigma_{\bm{b}_0})}}{4}.
\end{aligned}
\end{equation}
Using the cyclicity of the trace this may be written more succinctly as 
\begin{equation}
\begin{aligned}
(\bm{c}_0, T(\rho_{AB^{(k)}}) \bm{b}_0) &= \frac{2 + \sqrt{1-\gamma_{k-1}^2}}{4} (\bm{c}_0, T(\rho_{AB^{(k-1)}}) \bm{b}_0) + \frac{\tr{ \rho_{AB^{(k-1)}}(\bm{c}_{0} \otimes \sigma_{\bm{b}_0}\sigma_{\bm{b}_0}\sigma_{\bm{b}_0})}}{4} \\
&+(1-\sqrt{1-\gamma_{k-1}^2})\frac{\tr{ \rho_{AB^{(k-1)}}(\bm{c}_{0} \otimes \sigma_{\bm{b}_1}\sigma_{\bm{b}_0}\sigma_{\bm{b}_1})}}{4}.
\end{aligned}
\end{equation}
We then use the identity $\sigma_{\bm{a}} \sigma_{\bm{b}} = (\bm{a},\bm{b})\id + \mathrm{i} \sigma_{\bm{a} \times \bm{b}}$ to simplify the matrix products in the above expressions. In particular we have
$\sigma_{\bm{b}_0}\sigma_{\bm{b}_0}\sigma_{\bm{b}_0}=\sigma_{\bm{b}_0}$ and
$\sigma_{\bm{b}_1}\sigma_{\bm{b}_0}\sigma_{\bm{b}_1}=-\sigma_{\bm{b}_0}$.
Inserting this into the above expression we get
\begin{equation}
\begin{aligned}
(\bm{c}_0, T(\rho_{AB^{(k)}}) \bm{b}_0) &= \left(\frac{2 + \sqrt{1-\gamma_{k-1}^2}}{4} + \frac14 - \frac{1-\sqrt{1-\gamma_{k-1}^2}}{4} \right)(\bm{c}_0, T(\rho_{AB^{(k-1)}}) \bm{b}_0) \\
&= \left(\frac{1 + \sqrt{1-\gamma_{k-1}^2}}{2} \right)(\bm{c}_0, T(\rho_{AB^{(k-1)}}) \bm{b}_0). 
\end{aligned}
\end{equation}
Performing an analogous calculation for $(\bm{c}_1, T(\rho_{AB^{(k)}}) \bm{b}_1)$ yields
\begin{equation}
\begin{aligned}
(\bm{c}_1, T(\rho_{AB^{(k)}}) \bm{b}_1) &= \left(\frac{2 + \sqrt{1-\gamma_{k-1}^2}}{4} - \frac14 + \frac{1-\sqrt{1-\gamma_{k-1}^2}}{4} \right)(\bm{c}_1, T(\rho_{AB^{(k-1)}}) \bm{b}_1) \\
&= \frac12(\bm{c}_1, T(\rho_{AB^{(k-1)}}) \bm{b}_1). 
\end{aligned}
\end{equation}
By recursion these give
\begin{align}
(\bm{c}_0, T(\rho_{AB^{(k)}}) \bm{b}_0) &= 2^{1-k}(\bm{c}_0, T(\rho_{AB^{(1)}}) \bm{b}_0)\prod_{j=1}^{k-1}(1+\sqrt{1-\gamma_j^2})\, \\
(\bm{c}_1, T(\rho_{AB^{(k)}}) \bm{b}_1) &= 2^{1-k}(\bm{c}_1, T(\rho_{AB^{(1)}}) \bm{b}_1)\,.
\end{align}
Inserting these into the CHSH expression \eqref{eq:chsh_expression2_app} and noting that $\sqrt{\lambda_i} = (\bm{c}_i, T(\rho_{AB^{(1)}}) \bm{b}_i)$ we obtain
\begin{equation}
\chshexp{k} = 2^{2-k}\left(\gamma_k \sqrt{\lambda_1} \sin(\theta) + \sqrt{\lambda_0}\cos(\theta)\prod_{j=1}^{k-1} \left(1+ \sqrt{1-\gamma_j^2}\right)\right).
\end{equation}
Note that for the maximally entangled state we have $\lambda_0 = \lambda_1 = 1$ and so in this case $\chshexp{k}$ reduces to the CHSH expression\ifarxiv~\eqref{eq:chsh}\else from the main text\fi.

\ifarxiv\section{Proof of Lemma~\ref{lem:gammak_increasing}}\else
\section{Proof of Lemma~1}\fi\label{app:lem1}
We prove this for the more general sequence (cf.\ the expression given in \ifarxiv Section~\ref{sec:unbound} of \fi the main text),
\begin{equation}\label{eq:gammak_app}
\gamma_k(\theta) := \begin{cases}
(1+\epsilon)\frac{2^{k-1} - \cos(\theta) \prod_{j=1}^{k-1} \left(1 + \sqrt{1-\gamma_j^2(\theta)}\right)}{\sqrt{\lambda_1}\sin(\theta)} \qquad &\text{if } 0 < \gamma_{k-1}(\theta) < 1 \\
\infty  &\text{otherwise }
\end{cases},
\end{equation}
with $\epsilon>0$, $\lambda_1>0$ and $\gamma_1(\theta) = (1+\epsilon)\frac{1-\cos(\theta)}{\sqrt{\lambda_1}\sin(\theta)}$.  The lemma as stated in the main text corresponds to the case where $\lambda_1 = 1$.

First we note that $\gamma_1(\theta) > 0$ for all $\theta \in (0,\pi/4]$ and $\epsilon>0$. Furthermore, $\gamma_1(\theta)<1$ whenever $0<\theta < \operatorname{arctan}\left(\tfrac{2 (1+\epsilon) \sqrt{\lambda_1}}{(1+\epsilon)^2 - \lambda_1}\right)$, so the sequence may admit additional finite terms.

If a term in the sequence is infinite then so are all subsequent terms. Suppose then that the $k^{\text{th}}$ term of the sequence is finite. This implies that $0 < \gamma_{j}(\theta) < 1$ for all $j = 1, \dots, k-1$ and in particular we have $1 < 1+\sqrt{1- \gamma_{k-1}^2(\theta)} < 2$. The bound $1+\sqrt{1- \gamma_{k-1}^2(\theta)} < 2$ then implies that $\gamma_k(\theta)/\gamma_{k-1}(\theta) > 2$ and so the finite terms of the sequence are strictly increasing and the sequence as a whole is a positive, increasing sequence.

\ifarxiv\section{Proof of Lemma~\ref{lem:gammak_limit}}\else
\section{Proof of Lemma~2}\fi\label{app:lem2}
Again, we prove this lemma for the more general sequence (cf.\ the expression given in \ifarxiv Section~\ref{sec:unbound} of \fi the main text),
\begin{equation}
\gamma_k(\theta) := \begin{cases}
(1+\epsilon)\frac{2^{k-1} - \cos(\theta) \prod_{j=1}^{k-1} \left(1 + \sqrt{1-\gamma_j^2(\theta)}\right)}{\sqrt{\lambda_1}\sin(\theta)} \qquad &\text{if } 0 < \gamma_{k-1}(\theta) < 1 \\
\infty  &\text{otherwise }
\end{cases},
\end{equation}
with $\epsilon>0$, $\lambda_1>0$ and $\gamma_1(\theta) = (1+\epsilon)\frac{1-\cos(\theta)}{\sqrt{\lambda_1}\sin(\theta)}$.  The lemma as stated in the main text corresponds to the case where $\lambda_1 = 1$.

We use the inequalities $\sqrt{1-x^{2}} \geq 1 - x^{2}$, for $x \in [0,1]$; $\cos(\theta) \geq 1-\theta^{2}/2$, for $\theta \in (0, \pi/4]$; and $\sin(\theta)\geq\theta/2$, for $\theta \in (0, \pi/4]$. Applying these to $\gamma_k(\theta)$ we have that when $\gamma_k(\theta)$ is finite,
\begin{equation}
\begin{aligned}
\gamma_k(\theta) &\leq (1+\epsilon) \frac{2^{k-1} - (1-\theta^{2}/2)\prod_{j=1}^{k-1}(2 - \gamma_j^2(\theta))}{\sqrt{\lambda_1}\theta/2} \\
&= (1+\epsilon)2^k \frac{1 - (1-\theta^{2}/2)\prod_{j=1}^{k-1}(1 - \gamma_j^2(\theta)/2)}{\sqrt{\lambda_1}\theta}.
\end{aligned}
\end{equation}
Now define a new sequence $p_k(\theta)$ based on this upper bound where
\begin{equation}\label{eq:pk_definition}
p_k(\theta) = \begin{cases}
(1+\epsilon)2^k \frac{1 - (1-\theta^{2}/2)\prod_{j=1}^{k-1}(1 - p_j^2(\theta)/2)}{\sqrt{\lambda_1}\theta} \qquad &\text{if } 0 < p_{k-1}(\theta) < 1 \\
\infty  &\text{otherwise }
\end{cases},
\end{equation}
and $p_1(\theta) = (1+\epsilon) \frac{\theta}{\sqrt{\lambda_1}}$. Since the right hand side of~\eqref{eq:pk_definition} is increasing with $\gamma_j(\theta)$ for any $j<k$, we have that $p_k(\theta) \geq \gamma_k(\theta)$. Therefore, $\gamma_k(\theta)$ is finite whenever $p_k(\theta)$ is finite. 


We now show that there always exists a $\theta_k \in (0, \pi/4]$ such that $0 < p_k(\theta) < 1$ for all $\theta \in (0, \theta_k)$ and that $\lim_{\theta \to 0^+} p_k(\theta) = 0$.

Firstly, note that $0 < p_1(\theta) < 1$ for all $\theta \in (0, \tfrac{\sqrt{\lambda_1}}{1+\epsilon})$ and that $p_1(\theta)$ is an odd polynomial in $\theta$. Therefore on the interval $(0, \tfrac{\sqrt{\lambda_1}}{1+\epsilon})$, $p_2(\theta)$ is an odd polynomial. As $p_2(\theta) > p_1(\theta) > 0$ and $p_2(\theta)$ has no constant term we know that $\lim_{\theta \to 0^+} p_2(\theta) = 0$ and therefore, as $p_2(\theta)$ is a continuous positive function on the interval, there exists some $\theta_2 \in (0, \tfrac{\sqrt{\lambda_1}}{1+\epsilon})$ such that $0 < p_2(\theta) < 1$ for all $\theta \in (0, \theta_2)$. We now proceed by induction to show that there exists some $\theta_k>0$ such that $p_k(\theta)$ is an odd polynomial on the interval $(0, \theta_k)$ and $0 < p_k(\theta) < 1$. 

Suppose there exists a $\theta_{k-1}$ such that on the interval $(0, \theta_{k-1})$ all $p_j(\theta)$ for $j = 1,\dots, k-1$ are odd polynomials in $\theta$ and $0 < p_j(\theta) < 1$. If this is the case then on the interval $(0, \theta_{k-1})$ $p_k(\theta)$ is finite and the numerator in~\eqref{eq:pk_definition} is an even polynomial in $\theta$ with no constant term. Cancelling the $\theta$ in the denominator, $p_k(\theta)$ is an odd polynomial in $\theta$. In particular, this implies that $\lim_{\theta \to 0^+} p_k(\theta) = 0$ and also, because $p_k(\theta) > 0$, there exists some $\theta_k \in (0, \theta_{k-1})$ such that $0 < p_k(\theta) < 1$ for all $\theta \in (0, \theta_k)$. It follows by induction that for any $n \in \bbn$ we can find a $\theta_n\in(0,\pi/4]$ such that $ 0 < p_1(\theta) < \dots < p_n(\theta) < 1$ for all $\theta \in (0, \theta_n)$.

Finally, as for each $n \in \bbn$ we have $0 <\gamma_n(\theta) \leq p_n(\theta)$ it follows that there exists a $\theta_n \in (0, \pi/4]$ such that $ 0 < \gamma_1(\theta) < \dots < \gamma_n(\theta) < 1$ for all $\theta \in (0, \theta_n)$ and that $\lim_{\theta \to 0^+} \gamma_n(\theta) = 0$.

\section{Unbounded violations for a larger set of two-qubit states}
Following the same proof strategy as presented in the main text, in order to observe $\chshexp{k} > 2$ for the more general measurement strategy \eqref{eq:general_strategy} we require
\begin{equation}
\gamma_k > \frac{2^{k-1} - \sqrt{\lambda_0}\cos(\theta) \prod_{j=1}^{k-1} \left(1 + \sqrt{1-\gamma_j^2}\right)}{\sqrt{\lambda_1}\sin(\theta)}.
\end{equation}
We can define a sequence $(\gamma_i(\theta))_i$ analogous to that \ifarxiv of~\eqref{eq:gammak_alt} \fi from the main text, i.e., for some fixed $\epsilon > 0$ we have
\begin{equation}
  \gamma_k(\theta) = \begin{cases} (1+\epsilon) \frac{2^{k-1} - \sqrt{\lambda_0}\cos(\theta) \prod_{j=1}^{k-1} \left(1 + \sqrt{1-\gamma_j^2}\right)}{\sqrt{\lambda_1}\sin(\theta)}&\text{if }\gamma_j\in (0,1)\text{ for all}j\in\{1,\dots,k-1\}\\
    \infty&\text{otherwise}\end{cases}.
\end{equation}
By \ifarxiv Lemma~\ref{lem:gammak_increasing} \else Lemma~1 \fi we have that for any non-zero $\lambda_0, \lambda_1$, $\gamma_k(\theta)$ is a strictly increasing sequence when finite. However, in the more general setting \ifarxiv Lemma~\ref{lem:gammak_limit} \else Lemma~2 \fi holds only when $\lambda_0 = 1$. From this, the proof of unbounded violations (\ifarxiv Theorem~\ref{thm:main} \else Theorem~1 \fi from the main text) can be replicated for the measurement strategy~\eqref{eq:general_strategy} when $\lambda_0 = 1$ and $\lambda_1 > 0$. We therefore arrive at the following, more general, theorem. 

\begin{theorem}\label{thm:main_generalised}
	Let $\rho_{AB^{(1)}}$ be an entangled two-qubit state and let $\lambda_0, \lambda_1$ be the two largest eigenvalues of the matrix $T(\rho_{AB^{(1)}})T^{T}(\rho_{AB^{(1)}})$. If $\lambda_0 = 1$ and $\lambda_1 > 0$, then for any $n \in \bbn$, there exists a sequence $(\gamma_i)_{i}$ and a $\theta \in (0, \pi/4]$ such that the measurement strategy~\eqref{eq:general_strategy} achieves
	\begin{equation}
	\chshexp{k} > 2,
	\end{equation}
	for all $k = 1,\dots,n$. 
\end{theorem}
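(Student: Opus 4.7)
The plan is to mirror the proof of Theorem~\ref{thm:main} almost verbatim, but in the more general setting already prepared in Appendices~\ref{app:CHSH_deriv}, \ref{app:lem1}, and \ref{app:lem2}. Since Appendix~\ref{app:CHSH_deriv} derives
\[
\chshexp{k} = 2^{2-k}\left(\gamma_k \sqrt{\lambda_1} \sin(\theta) + \sqrt{\lambda_0}\cos(\theta)\prod_{j=1}^{k-1} \left(1+ \sqrt{1-\gamma_j^2}\right)\right)
\]
for the measurement strategy~\eqref{eq:general_strategy}, the condition $\chshexp{k} > 2$ rearranges to a strict lower bound on $\gamma_k$. I would accordingly define the recursive sequence
\[
\gamma_k(\theta) := (1+\epsilon)\frac{2^{k-1} - \sqrt{\lambda_0}\cos(\theta)\prod_{j=1}^{k-1}\left(1+\sqrt{1-\gamma_j^2(\theta)}\right)}{\sqrt{\lambda_1}\sin(\theta)}
\]
with $\gamma_k(\theta) := \infty$ whenever some earlier term leaves $(0,1)$, built-in multiplicative slack $1+\epsilon > 1$. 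The whole problem then reduces to finding a single $\theta_n$ at which all of $\gamma_1(\theta_n), \ldots, \gamma_n(\theta_n)$ are simultaneously finite and in $(0,1)$: under that condition, substitution back into~\eqref{eq:general_strategy} produces $\chshexp{k} > 2$ for every $k \leq n$.

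Next I would invoke the two generalised lemmas already established in the appendices. The generalised form of Lemma~\ref{lem:gammak_increasing} (Appendix~\ref{app:lem1}), which is valid for any $\lambda_0, \lambda_1 > 0$, guarantees strict monotonicity of the finite portion of $(\gamma_k(\theta))_k$. The generalised form of Lemma~\ref{lem:gammak_limit} (Appendix~\ref{app:lem2}), which is the step that actually uses the hypothesis $\lambda_0 = 1$, then supplies a $\theta_n \in (0,\pi/4]$ with $\gamma_n(\theta_n) < 1$. Exactly as in the proof of Theorem~\ref{thm:main}, mere finiteness of $\gamma_n(\theta_n)$ together with the case split in the recursive definition of $\gamma_k$ forces every earlier $\gamma_j(\theta_n)$ to lie in $(0,1)$; Lemma~\ref{lem:gammak_increasing} then chains these values into
\[
0 < \gamma_1(\theta_n) < \gamma_2(\theta_n) < \cdots < \gamma_n(\theta_n) < 1,
\]
so all the sharpness parameters are valid and the CHSH violations follow automatically.

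The genuine obstacle, and the reason for the hypothesis $\lambda_0 = 1$, sits inside the generalised Lemma~\ref{lem:gammak_limit}. If one instead allowed $\lambda_0 < 1$, then already the base case $\gamma_1(\theta) = (1+\epsilon)(1-\sqrt{\lambda_0}\cos\theta)/(\sqrt{\lambda_1}\sin\theta)$ would fail to vanish as $\theta \to 0^+$, since its numerator is bounded below by $1-\sqrt{\lambda_0} > 0$; the whole cascade of odd-polynomial upper bounds that drives Appendix~\ref{app:lem2} would collapse at its very first step. The assumption $\lambda_0 = 1$ is precisely what makes the numerator of $\gamma_1(\theta)$ an even polynomial in $\theta$ with no constant term, allowing the inductive argument to propagate from $k$ to $k+1$. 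Once this (already-proved) lemma is in hand, the assembly of Theorem~\ref{thm:main_generalised} is mechanical: I would simply state the reduction above and cite the two appendix lemmas.
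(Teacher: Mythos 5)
Your proposal is correct and follows essentially the same route as the paper: derive the generalised violation condition with $\sqrt{\lambda_0}$ and $\sqrt{\lambda_1}$, define the analogous recursive sequence with $(1+\epsilon)$ slack, and invoke the generalised versions of Lemmas~\ref{lem:gammak_increasing} and~\ref{lem:gammak_limit} (the latter being where $\lambda_0=1$ is essential) to assemble the result exactly as in Theorem~\ref{thm:main}. Your explanation of why $\lambda_0<1$ would leave a nonvanishing constant term in the numerator of $\gamma_1(\theta)$ matches the paper's own observation that Lemma~\ref{lem:gammak_limit} holds only when $\lambda_0=1$.
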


As an example, for two-qubit states of the form
\begin{equation}\label{eq:explicit-family}
\begin{pmatrix}
\alpha & 0 & 0 & \beta \\
0 & 0 & 0 & 0 \\
0 & 0 & 0 & 0 \\
\beta^* & 0 & 0 & 1-\alpha
\end{pmatrix}
\end{equation}
where $\alpha\in[0,1]$ and $|\beta|\leq\sqrt{\alpha(1-\alpha)}$, we have $\lambda_0=1$ and $\lambda_1=4|\beta|^2$. Moreover, we have $\sigma_{\bm{b}_0}=\sigma_3$ and $\sigma_{\bm{b}_1}=\sigma_1$ and so their measurement strategy coincides with the strategy presented in the main text.\footnote{For this family of states the second and third largest eigenvalues coincide, i.e., $\lambda_1 = \lambda_2$, and so one could take $\sigma_{\bm{b}_1} = \sigma_2$ instead.} By the Schmidt-decomposition we can always write a pure two-qubit entangled state as $\ket{\psi} = \cos(\varphi) \ket{00} + \sin(\varphi) \ket{11}$ for $\varphi \in (0, \pi/4]$. As such, all pure two-qubit entangled states may be written in the form \eqref{eq:explicit-family} for some suitably chosen basis. Furthermore, we have $\lambda_1 = \sin^2(2 \varphi)$ which means that $\lambda_1 > 0$ for all $\varphi\in(0, \pi/4]$.

\begin{corollary}
	By initially sharing any pure, entangled two-qubit state, the number of Bobs that can violate the CHSH inequality with a single Alice is unbounded.
\end{corollary}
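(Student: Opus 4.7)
The plan is to combine the two preceding lemmas with the defining property of the recursive sequence $(\gamma_k(\theta))_{k}$. The key observation is that the sequence in~\eqref{eq:gammak_alt} is constructed precisely so that, whenever $\gamma_k(\theta)$ is finite, it exceeds the threshold in~\eqref{eq:violation_constraint} by a multiplicative factor of $(1+\epsilon)$. Consequently, \emph{if} we can find a single $\theta \in (0, \pi/4]$ for which every $\gamma_k(\theta)$ with $k \leq n$ is a valid sharpness parameter (i.e.\ lies in $(0,1)$), then by~\eqref{eq:chsh} the corresponding measurement strategy automatically yields $\chshexp{k} > 2$ for each $k = 1,\dots,n$. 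So the entire theorem reduces to producing such a $\theta$.

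The first step would be to fix $n \in \bbn$ and $\epsilon > 0$, and to invoke Lemma~\ref{lem:gammak_limit}, which guarantees the existence of $\theta_n \in (0, \pi/4]$ such that $\gamma_k(\theta) < 1$ for every $k \leq n$ and every $\theta \in (0, \theta_n)$ (equivalently, that $\gamma_n(\theta)$ can be made as small as we like by shrinking $\theta$). Pick any $\theta$ in this interval; call it $\theta_n$ as in the statement. Then the second step is to apply Lemma~\ref{lem:gammak_increasing} to this $\theta_n$: since $\gamma_n(\theta_n) < \infty$, none of the earlier terms can have been infinite, so in fact $0 < \gamma_1(\theta_n) < \gamma_2(\theta_n) < \dots < \gamma_n(\theta_n) < 1$, giving a strictly increasing chain of admissible sharpness parameters.

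With these parameters in hand, the final step is essentially a bookkeeping check: because each $\gamma_k(\theta_n)$ is finite, the recursive definition~\eqref{eq:gammak_alt} gives $\gamma_k(\theta_n) = (1+\epsilon)\bigl[\,2^{k-1} - \cos(\theta_n)P_k\bigr]/\sin(\theta_n)$, which by construction strictly exceeds the right-hand side of~\eqref{eq:violation_constraint}. Substituting into the CHSH formula~\eqref{eq:chsh} then yields $\chshexp{k} > 2$ for every $k \leq n$, completing the proof.

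The main conceptual obstacle has already been absorbed into the two lemmas: Lemma~\ref{lem:gammak_limit} is what genuinely controls the difficulty, since a priori it is far from obvious that one can keep \emph{all} $n$ of the $\gamma_k$ below $1$ simultaneously as $n$ grows, given that each $\gamma_k$ is at least twice the previous one (by Lemma~\ref{lem:gammak_increasing}). The trick is that by letting $\theta \to 0^+$ the initial term $\gamma_1(\theta)$ vanishes, and the polynomial/analytic estimates underlying Lemma~\ref{lem:gammak_limit} show that each subsequent term also tends to zero, so the doubling behaviour can be outrun for any fixed finite $n$. Once this is granted, the present theorem follows essentially immediately by stringing the two lemmas together.
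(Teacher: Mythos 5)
There is a genuine gap: what you have written is essentially the paper's proof of Theorem~\ref{thm:main} (the maximally entangled case), not of this corollary. The corollary asserts unbounded violations for \emph{any} pure entangled two-qubit state, and your argument never engages with that generality. You work throughout with the sequence~\eqref{eq:gammak_alt} and the CHSH expression~\eqref{eq:chsh}, both of which are specific to the initial state $\tfrac{1}{\sqrt{2}}(\ket{00}+\ket{11})$. For a general pure entangled state the relevant CHSH value is $\chshexp{k} = 2^{2-k}\bigl(\gamma_k\sqrt{\lambda_1}\sin(\theta) + \sqrt{\lambda_0}\cos(\theta)\prod_{j=1}^{k-1}(1+\sqrt{1-\gamma_j^2})\bigr)$, where $\lambda_0,\lambda_1$ are the two largest eigenvalues of $T(\rho)T^{T}(\rho)$, and the measurement directions must be taken along the corresponding singular vectors as in~\eqref{eq:general_strategy}. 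None of this appears in your argument.

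The missing steps are the ones the paper actually uses: (i) by the Schmidt decomposition, any pure entangled two-qubit state can be written as $\cos(\varphi)\ket{00}+\sin(\varphi)\ket{11}$ with $\varphi\in(0,\pi/4]$, hence is of the form~\eqref{eq:explicit-family}; (ii) for such states one computes $\lambda_0=1$ and $\lambda_1=\sin^2(2\varphi)>0$; (iii) one then invokes the generalised result (Theorem~\ref{thm:main_generalised}), whose proof requires the generalised versions of Lemmas~\ref{lem:gammak_increasing} and~\ref{lem:gammak_limit} with the factor $\sqrt{\lambda_1}$ in the denominator of the recursion. The condition $\lambda_0=1$ is not cosmetic — the generalised Lemma~\ref{lem:gammak_limit} only holds when $\lambda_0=1$, because otherwise the numerator $2^{k-1}-\sqrt{\lambda_0}\cos(\theta)\prod_j(1+\sqrt{1-\gamma_j^2})$ does not vanish as $\theta\to 0^+$ and the sharpness parameters cannot be kept below $1$. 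So the substance of the corollary is precisely the verification that pure entangled states satisfy $\lambda_0=1$ and $\lambda_1>0$, which your proof omits entirely.
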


\begin{remark}
	There are also some mixed states that satisfy the conditions $\lambda_0 = 1$ and $\lambda_1 >0$. If the condition $\lambda_0 = 1$ holds then there exists some vector $\bm{c} \in \bbr^3$ of unit length such that $\tr{\rho(\sigma_{\bm c} \otimes \sigma_{\bm c})} = \pm 1$. If $\outer{c_+}$ and $\outer{c_-}$ are the projectors onto the eigenstates of $\sigma_{\bm c}$, then any mixture of $\rho$ with $\outer{c_+}\otimes\outer{c_+}$ or $\outer{c_-}\otimes\outer{c_-}$ will produce a state that still satisfies $\lambda_0 = 1$.
\end{remark}

\section{Bounds on the sequence $(\theta_n)$}\label{app:theta}
Suppose $\theta$ is chosen small enough such that the $k^{\text{th}}$ term in the sequence $(\gamma_k)_{k \in \bbn}$ is finite, i.e.
\begin{equation}\label{eq:gammak_remark}
\gamma_k(\theta) = (1+\epsilon)\frac{2^{k-1} - \cos(\theta)\prod_{j=1}^{k-1}\left(1+ \sqrt{1-\gamma_j^2(\theta)}\right)}{\sin(\theta)}.
\end{equation}
Using the inequalities $\sqrt{1-x^2} \geq 1 - x^2$ for $x \in [0,1]$, $\cos(\theta) \geq 1 - \theta^2/2$ for $\theta \in \bbr$, and $\sin(\theta) \geq \theta/2$ for $\theta \in [0,\pi/2]$ we arrive at the upper bound
\begin{equation}
\gamma_k(\theta) \leq (1+\epsilon) 2^{k} \frac{1- (1- \theta^2/2)\prod_{j=1}^{k-1} \left( 1 - \gamma_j^2(\theta)/2\right) } {\theta}.
\end{equation}
Similarly, using the inequalities $\sqrt{1-x^2} \leq 1 - x^2/2$ for $x \in [0,1]$, $\cos(\theta) \leq 1 - \theta^2/4$ for $\theta \in [0,\pi/4]$, and $\sin(\theta) \leq \theta$ for $\theta \in [0,\pi/2]$ we find the lower bound
\begin{equation}
\gamma_k(\theta) \geq (1+\epsilon) 2^{k-1} \frac{1- (1-\theta^2/4)\prod_{j=1}^{k-1} \left( 1 - \gamma_j^2(\theta)/4 \right) } {\theta}.
\end{equation}
We now define two sequences $p_k^{\uparrow}(\theta)$ and $p_k^{\downarrow}(\theta)$ based on the upper and lower bounds respectively:
\begin{equation}\label{eq:sk_up}
p^{\uparrow}_k(\theta) := (1+\epsilon) 2^{k} \frac{1- (1- \theta^2/2)\prod_{j=1}^{k-1} \left( 1 - p^{\uparrow}_j(\theta)^2/2\right) } {\theta},
\end{equation}
\begin{equation}\label{eq:sk_down}
p^{\downarrow}_k(\theta) := (1+\epsilon) 2^{k-1} \frac{1- (1-\theta^2/4)\prod_{j=1}^{k-1} \left( 1 - p^{\downarrow}_j(\theta)^2/4 \right) } {\theta},
\end{equation}
with $p^{\uparrow}_1(\theta) = (1+\epsilon) \, \theta$ and $p^{\downarrow}_1(\theta) = (1+\epsilon)\theta/4$. This gives us two sequences that satisfy $p^{\downarrow}_k(\theta) \leq \gamma_k(\theta) \leq p_k^{\uparrow}(\theta)$ whenever $p_j^{\uparrow}(\theta) < 1$ for all $j < k$. For all $k$, $p_k^{\uparrow}(\theta)$ and $p_k^{\downarrow}(\theta)$ are polynomials in $\theta$ with no constant term.  We proceed to estimate the growth of $\gamma_k(\theta)$ by bounding the growth of the coefficients of $\theta$. Let $c^{\uparrow}_k$ and $c^{\downarrow}_k$ be the coefficient of $\theta$ in $p^{\uparrow}_k(\theta)$ and $p_k^{\downarrow}(\theta)$ respectively. Then,
\begin{equation}
c^{\uparrow}_k = (1+\epsilon)2^{k-1}\left(1+\sum_{j=1}^{k-1} (c^{\uparrow}_j)^2 \right)
\end{equation}
and
\begin{equation}
c^{\downarrow}_k = (1+\epsilon)2^{k-3}\left(1 + \sum_{j=1}^{k-1} (c^{\downarrow}_j)^2 \right)
\end{equation}
with $c^{\uparrow}_1 = (1+\epsilon)$ and $c^{\downarrow}_1 = (1+\epsilon)/4$. Unfortunately, no general method exists for solving nonlinear recurrence relations of higher orders. A simpler lower bounding sequence for $c^{\downarrow}_k$ can be found by noting $\sum_{j=1}^{k-1} c_j^2 \geq c_{k-1}^2$. Using this we get a sequence
\begin{equation}\label{eq:dn_down}
\ddown_k =(1+\epsilon)2^{k-3} (\ddown_{k-1})^2\,.
\end{equation}
By direct computation we can compute $c^{\downarrow}_4$, a polynomial in $\epsilon$ with positive co-efficients, and with constant term 
larger than $4$.  We choose to start the sequence $\ddown_k$ at $\ddown_4=4\leq c^{\downarrow}_4$. Note that in principle we could start the sequence at $\ddown_1=(1+\epsilon)/4$, but starting at the fourth term, which is always larger than $1$, gives a tighter bound. (In addition, the choice of the bound of $4$ enables a neater final expression).
Now consider an upper bound for $c_k^{\uparrow}$, using $\sum_{j=1}^{k-1} c_j^2 \leq (k-1) c_{k-1}^2$ we have 
\begin{align*}
c_k^{\uparrow} &= (1+\epsilon)2^{k-1}\left(1 + \sum_{j=1}^{k-1}  (c^{\uparrow}_j)^2 \right)\\
&\leq (1+\epsilon)2^{k-1} k (c^{\uparrow}_{k-1})^2  \\
&\leq (1+\epsilon) 2^{2k - 1}  (c^{\uparrow}_{k-1})^2,
\end{align*}
where on the second line we used $k \leq 2^k$. Using this upper bound we define the new sequence 
\begin{equation}\label{eq:dn_up}
\dup_k =(1+\epsilon)2^{2k-1} (\dup_{k-1})^2,
\end{equation}
with $\dup_1 = (1+\epsilon)$.
The following two lemmas give us a closed form for $\dup_k$ and $\ddown_k$.
\begin{lemma}\label{lem:bk}
	Let $k \in \bbn$ with $k \geq k_0$ and $t, c \in \bbr$. Then the sequence
	\begin{equation}
	b_k := 2 b_{k-1} + tk + c
	\end{equation}
	admits the closed form solution
	\begin{equation}
          b_k=2^{k-1}(b_{k_0}+(2+k_0)t+c)-(k+2)t-c\,.
         \end{equation}
         \begin{proof}
		Rearranging, we find the relation 
		$$
		b_k - 2 b_{k-1} = t k + c
		$$
		and through a telescoping procedure we can write 
		\begin{align*}
		b_k-2^{k-k_0}b_{k_0} &= \sum_{r=0}^{k-k_0-1} 2^r(b_{k-r}-2b_{k-r-1}) \\
		&= \sum_{r=0}^{k-k_0-1} 2^r (t(k-r)+c).
		\end{align*}
		Evaluating the summation we find
		\begin{align*}
		b_k-2^{k-k_0}b_{k_0} &=2^{k-k_0}((2+k_0)t+c)-(k+2)t-c
		\end{align*}
		and so 
		\begin{align*}
		b_k &= 2^{k-k_0}(b_{k_0}+(2+k_0)t+c)-(k+2)t-c.\qedhere
		\end{align*}
	\end{proof}
\end{lemma}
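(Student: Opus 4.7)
The recurrence $b_k = 2 b_{k-1} + tk + c$ is first-order linear with constant coefficient $2$ and an affine-in-$k$ forcing term, so the plan is the textbook one: express $b_k$ as a sum of a homogeneous solution $A\cdot 2^k$ and a particular solution chosen to match the degree of the forcing term, then determine $A$ from the boundary condition at $k=k_0$.

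For the particular solution I would try the ansatz $b_k^p = \alpha k + \beta$, substitute into the recurrence, and match coefficients of $k$ and of $1$. This yields $\alpha = -t$ and $\beta = -2t - c$, so $b_k^p = -(k+2)t - c$. The general solution is therefore $b_k = A\cdot 2^k - (k+2)t - c$, and imposing $b_{k_0} = A\cdot 2^{k_0} - (k_0+2)t - c$ gives $A = 2^{-k_0}(b_{k_0} + (k_0+2)t + c)$. Substituting back produces $b_k = 2^{k-k_0}(b_{k_0} + (k_0+2)t + c) - (k+2)t - c$, which coincides with the stated formula when $k_0 = 1$; the $2^{k-1}$ factor in the statement appears to be a mild typo for $2^{k-k_0}$, but the derivation is identical.

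A more self-contained alternative, which avoids guessing an ansatz, is a telescoping argument. Rewrite the recurrence as $b_k - 2 b_{k-1} = tk + c$, multiply the equation indexed by $k-r$ by $2^r$, and sum over $r = 0, \dots, k-k_0-1$. The left-hand side telescopes to $b_k - 2^{k-k_0}b_{k_0}$, and the right-hand side becomes the arithmetic-geometric sum $\sum_{r=0}^{k-k_0-1} 2^r\bigl(t(k-r)+c\bigr)$. This can be evaluated using $\sum_{r=0}^{m-1} 2^r = 2^m - 1$ together with its arithmetic-weighted version $\sum_{r=0}^{m-1} r\cdot 2^r = (m-2)2^m + 2$, and after simplification yields the same closed form.

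I do not expect any conceptual obstacle; the only care needed is in evaluating the finite arithmetic-geometric sum without sign or off-by-one errors. I would conclude by verifying directly that the closed form satisfies both the recurrence and the initial condition $b_k|_{k=k_0} = b_{k_0}$, which both catches any arithmetic slips and makes the role of the $2^{k-k_0}$ factor transparent.
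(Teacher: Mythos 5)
Your proposal is correct, and in fact it contains the paper's proof as its ``alternative'' route: the paper proves the lemma exactly by the telescoping argument you sketch second (multiply the relation $b_{k-r}-2b_{k-r-1}=t(k-r)+c$ by $2^r$, sum over $r=0,\dots,k-k_0-1$, and evaluate the resulting arithmetic--geometric sum). Your primary route --- writing $b_k$ as $A\cdot 2^k$ plus the particular solution $-(k+2)t-c$ obtained from the affine ansatz, then fixing $A$ from the condition at $k_0$ --- is a genuinely different but entirely standard decomposition; it buys you a trivially checkable derivation (substitute and match coefficients) at the cost of having to ``guess'' the form of the particular solution, whereas the telescoping route is self-contained but requires the identity $\sum_{r=0}^{m-1}r\,2^r=(m-2)2^m+2$, which you quote correctly. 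Both yield $b_k=2^{k-k_0}\bigl(b_{k_0}+(2+k_0)t+c\bigr)-(k+2)t-c$, and you are right that the $2^{k-1}$ appearing in the lemma's displayed conclusion is a typo for $2^{k-k_0}$ (consistent with the general $k_0$ only when $k_0=1$): the paper's own proof, and its subsequent application of the lemma, both use $2^{k-k_0}$. Your closing suggestion to verify the closed form against the recurrence and the initial condition is a sensible sanity check that the paper omits.
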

\begin{lemma}
	Let $\epsilon>0$ and $k \in \bbn$ with $k \geq 1$. Let 
	\begin{align*}
		\ddown_k &=  (1+\epsilon)2^{k-3} (\ddown_{k-1})^2
		\intertext{and}
		\dup_k &= (1+\epsilon)2^{2k-1} (\dup_{k-1})^2
	\end{align*}
	define two sequences with $\ddown_4=4$ and $\dup_1 = (1+\epsilon)$. Then for $k\geq4$ we have
	\begin{align*}
		\ddown_k &= (1+\epsilon)^{2^{k-4}-1}\,2^{5\times2^{k-4}-k+1}
		\intertext{and for $k\geq1$ we have}
		\dup_k &= (1+\epsilon)^{2^{k}-1}\,2^{5\times 2^{k-1} - 2 k - 3}\,.
	\end{align*}  
	\begin{proof}
		We derive a closed-form expression for the sequence
		$$
		d_k = (1+\epsilon)2^{tk-a} d_{k-1}^{2},
		$$ 
		for $t,a,k \in \bbn$ and $k\geq k_0$. Taking logarithms of both sides we have 
		$$
		\log(d_k) = (tk-a) + \log(1+\epsilon) + 2 \log(d_{k-1}).
		$$
		Defining a new sequence via the replacements $b_k = \log(d_k)$ and $c = \log(1+\epsilon) - a$, we may apply Lemma~\ref{lem:bk} to get
		\begin{align*}
		b_k&=2^{k-k_0}(b_{k_0}+(2+k_0)t+c)-(k+2)t-c.
		\end{align*}
	Equivalently,
	\begin{align*}
	d_k&=2^{2^{k-k_0}(\log(d_{k_0})+(2+k_0)t+c)-(k+2)t-c} \\
	&=2^{2^{k-k_0}(\log(d_{k_0})+(2+k_0)t+\log(1+\epsilon)-a)-(k+2)t-\log(1+\epsilon)+a}
	\end{align*}
	To recover $\ddown_k$ we first set $t=1$, $a=3$, $k_0=4$ and $d_4=4$ to obtain
	\begin{align*}
          \ddown_k&=2^{2^{k-4}(5+\log(1+\epsilon))-(k+2)-\log(1+\epsilon)+3}\\
	&= 2^{(2^{k-4}-1)(\log(1+\epsilon)}2^{5\times2^{k-4}-k+1}\\
	&= (1+\epsilon)^{2^{k-4}-1}\,2^{5\times2^{k-4}+1-k}.
	\end{align*}
	Setting $t=2$, $a=1$, $k_0=1$ and $d_1=(1+\epsilon)$ we find
	\begin{align*}
		\dup_k &= 2^{2^{k-1}(\log(1+\epsilon)+\log(1+\epsilon) + 5) - 2(k+2) - \log(1+\epsilon) + 1}  \\
		&= 2^{2^{k-1}(2\log(1+\epsilon)+5)-(2k+3)-\log(1+\epsilon)} \\
		&= 2^{(2^k-1)\log(1+\epsilon)+5\times 2^{k-1}-2k-3} \\
		&= (1+\epsilon)^{2^k-1}\,2^{5\times 2^{k-1}-2k-3}\qedhere
	\end{align*}
	\end{proof}
\end{lemma}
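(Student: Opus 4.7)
The plan is to unify the two recursions under a common template and then reduce the proof to a single application of Lemma~\ref{lem:bk}. Both sequences are of the form
\begin{equation}
d_k = (1+\epsilon)\, 2^{tk-a}\, d_{k-1}^2,
\end{equation}
with $(t,a)=(1,3)$ recovering $\ddown_k$ and $(t,a)=(2,1)$ recovering $\dup_k$. The squaring makes the recursion nonlinear, but taking $\log_2$ of both sides linearises it. Setting $b_k := \log_2(d_k)$ the recursion becomes
\begin{equation}
b_k = 2 b_{k-1} + tk + c, \qquad c := \log_2(1+\epsilon) - a,
\end{equation}
which is exactly the hypothesis of Lemma~\ref{lem:bk}.

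Applying Lemma~\ref{lem:bk} with starting index $k_0$ chosen so that the initial condition sits at $k_0$ yields a closed form
\begin{equation}
b_k = 2^{k-k_0}\bigl(b_{k_0}+(2+k_0)t+c\bigr)-(k+2)t-c.
\end{equation}
Exponentiating, $d_k=2^{b_k}$ splits as a power of $(1+\epsilon)$ times a power of $2$. The coefficient of $\log_2(1+\epsilon)$ in $b_k$ collects into $2^{k-k_0}-1$ (the $-1$ coming from the trailing $-c$ cancelling one copy of $\log_2(1+\epsilon)$ in the bracket), which is where the $(1+\epsilon)^{2^{k-k_0}-1}$ factor in both closed forms originates.

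It then remains to specialise. For $\ddown_k$ I take $t=1$, $a=3$, $k_0=4$, $b_{k_0}=\log_2 4=2$; for $\dup_k$ I take $t=2$, $a=1$, $k_0=1$, $b_{k_0}=\log_2(1+\epsilon)$. In each case I substitute and group the purely integer terms (those not involving $\log_2(1+\epsilon)$) to read off the exponent on the power of $2$. For $\ddown_k$ this integer exponent simplifies to $8\cdot 2^{k-4}-3\cdot 2^{k-4}+1-k = 5\cdot 2^{k-4}-k+1$; for $\dup_k$ it simplifies to $5\cdot 2^{k-1}-2k-3$. There is no conceptual obstacle: the only real work is bookkeeping the constants so that the stated factors $5\cdot 2^{k-4}$ and $5\cdot 2^{k-1}$ emerge exactly, and checking that the choices of $k_0$ and $b_{k_0}$ match the given initial data $\ddown_4=4$ and $\dup_1=1+\epsilon$.
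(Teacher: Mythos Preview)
Your proposal is correct and follows essentially the same route as the paper: unify both recursions as $d_k=(1+\epsilon)2^{tk-a}d_{k-1}^2$, linearise by setting $b_k=\log_2 d_k$ so that Lemma~\ref{lem:bk} applies with $c=\log_2(1+\epsilon)-a$, then specialise $(t,a,k_0,b_{k_0})$ to $(1,3,4,2)$ and $(2,1,1,\log_2(1+\epsilon))$ respectively and separate the $\log_2(1+\epsilon)$-terms from the integer terms. Your observation that the $(1+\epsilon)^{2^{k-k_0}-1}$ factor arises because the coefficient of $\log_2(1+\epsilon)$ in $b_k$ is $2^{k-k_0}-1$ is exactly the mechanism the paper uses, and your bookkeeping of the integer exponents is accurate.
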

Hence both $\dup_k$ and $\ddown_k$ exhibit double-exponential growth with $k$.  Relating this back to $\gamma_n(\theta)$, from the proof \ifarxiv of Lemma~\ref{lem:gammak_limit} \else in Section~\ref{app:lem2} \fi we know that there exists some $\theta_n \in (0, \pi/4)$ such that on the interval $(0, \theta_n)$ we have $0 < p_k^{\downarrow}(\theta) \leq \gamma_k(\theta) \leq p_k^{\uparrow}(\theta)$ for all $k < n$. Furthermore, from the two previous lemmas we have established that there exist polynomials $d_n^{\downarrow} \theta + O(\theta^3)$ and $d_n^{\uparrow} \theta + O(\theta^3)$ such that
\begin{equation}
d_n^{\downarrow}\theta+O(\theta^3)\leq\gamma_n(\theta)\leq d_n^{\uparrow}\theta+O(\theta^3)
\end{equation}
on this interval. The two inequalities together imply that $\gamma_n(\theta)$ grows double-exponentially fast in $n$ for sufficiently small $\theta$, which suggests that we may need $\theta$ to be double exponentially small to observe $\gamma_n(\theta) \leq 1$. However, because we have not ruled out that the higher order terms remain significant when $\theta_n$ is double-exponentially small in $n$, our argument is not conclusive, although a double-exponental scaling agrees with the numerics presented in \ifarxiv Fig.~\ref{fig:numerics}\else the main text\fi.

\end{document}